\title{Quantitative Games on Probabilistic Timed Automata}
\author{Marta Kwiatkowska, Gethin Norman, and Ashutosh Trivedi}
\institute{Oxford University Computing Laboratory, Parks Road, Oxford,
  OX1 3QD} 
\begin{document}

\maketitle

\begin{abstract}
  Two-player zero-sum games are a well-established model for
  synthesising controllers that optimise some performance criterion.  
  In such games one player represents the controller, while the other
  describes the (adversarial) environment, and controller synthesis
  corresponds to computing the optimal strategies of the controller
  for a given criterion.  
  Asarin and Maler initiated the study of quantitative games on
  (non-probabilistic) timed automata by synthesising controllers which
  optimise the time to reach a final state.  
  The correctness and termination of their approach was dependent on
  exploiting the properties of a special class of functions, called
  \emph{simple functions},  that can be finitely represented. 
  In this paper we consider quantitative games over probabilistic
  timed automata.  
  Since the concept of simple functions is not sufficient to solve
  games in this setting, we generalise simple functions to so-called
  \emph{quasi-simple functions}. 
  Then, using this class of functions, we demonstrate that the problem
  of solving games with either expected reachability-time or expected
  discounted-time criteria on probabilistic timed automata are in
  NEXPTIME $\cap$ co-NEXPTIME.  
\end{abstract}

\section{Introduction}
Two-player zero-sum games on finite automata, as a
mechanism for supervisory controller synthesis of discrete event
systems, were introduced by Ramadge and Wonham~\cite{RW89}.
In this setting
the two players---called Min and Max---represent the `controller' and
the `environment'
and control-program synthesis corresponds to finding a winning (or
optimal) strategy of the `controller' for some given performance
objective. 
If the objectives are dependent on time, e.g.\
when the objective corresponds to completing a given set of tasks
within some deadline,  then games on timed automata are a
well-established approach for controller synthesis, see for example
\cite{AM99,ABM04,BCFL04,BC+07}. 

In this paper we extend this approach to systems which are
quantitative in terms of time \emph{and} probabilistic behaviour. 
Probabilistic information is important for modelling, e.g.,
faulty or unreliable components, the random coin flips of
distributed communication and security protocols, and performance
characteristics.  
We consider games on probabilistic timed automata
\cite{KNSS02,Jen96,Bea03}, a modelling framework for real-time systems
exhibiting both nondeterministic and probabilistic behaviour.  
We concentrate on expected
reachability-time games, where the performance objective
concerns the expected minimum time the controller can ensure
for the system to reach a target, regardless of
uncontrollable (environmental) events.
This approach has many practical applications, including job-shop
scheduling, where machines can be faulty or have variable 
execution times, and both routing and task graph scheduling
problems, where both time and stochastic behaviour is also relevant. 
We also discuss discounted-time games
where, intuitively, at each transition the system
breaks down with some non-zero probability, and the players try to
optimise the expected time to breakdown. 

\paragraph{Contributions.}
Our approach is inspired by the work of Asarin and Maler \cite{AM99}
who initiated the study of quantitative games on (non-probabilistic)
timed automata.  
Their results were dependent on exploiting the properties of a special
class of functions, called \emph{simple functions}, that can be
finitely represented.  
Since the concept of simple functions is not sufficient to solve
games in this setting, we generalise simple functions to so-called
\emph{quasi-simple functions}. 
Using this class of functions and the boundary region graph
construction \cite{JT08}, we demonstrate that the problem 
of solving games with either expected reachability-time or expected
discounted-time criteria on probabilistic timed automata are in
NEXPTIME $\cap$ co-NEXPTIME.  

\paragraph{Related Work.}
Hoffman and Wong-Toi~\cite{HW92} were the first to define 
and solve optimal controller synthesis problem for timed automata.
For a detailed introduction to the topic of qualitative games on timed
automata, see e.g. \cite{AMP95}. 
Asarin and Maler~\cite{AM99} initiated the study of quantitative
games on timed automata by providing a symbolic algorithm  to solve
reachability-time games. 
The works of \cite{BHPR07} and \cite{JT07} showed that the decision
version of the reachability-time game is EXPTIME-complete for timed
automata with at least two clocks.
For average-time objectives, Jurdzi{\'n}ski
and Trivedi~\cite{JT08b} showed the EXPTIME-completeness of the
problem for timed automata with two or more clocks.

A natural extension of reachability-time games for timed automata is
reachability-price games for priced timed automata.
Alur, Bernadsky, and Madhusudan~\cite{ABM04} and Bouyer et
al.~\cite{BCFL04} gave
semi-algorithms to compute the value of reachability-price
games on linearly-priced timed automata. In~\cite{BBR05} and
~\cite{BBM06} it was shown that checking the existence of optimal
strategies in a reachability-price game is undecidable for automata
with three clocks and stopwatch prices.  

We are not aware of any previous work studying games on probabilistic timed
automata.  
For a significantly different model of stochastic timed
games, \cite{BF09} show that deciding whether a
target is reachable within a given probability bound is
undecidable.   
Regarding one-player games on probabilistic timed automata, 
\cite{JKNT09} shows that a number of one-player optimisation
problems on concavely-priced probabilistic timed automata can be
reduced to solving corresponding problems on the boundary region
graph.  
We also mention \cite{KNPS06},  based on the digital
clocks approach \cite{HMP92}, which solves expected-time
(and expected-cost) reachability for a subclass of probabilistic timed
automata. 

\section{Preliminaries}
We begin by presenting the background material required in the remainder of the paper.
We assume, the sets $\Nat$ of non-negative 
integers, $\Real$ of reals
and $\Rplus$ of non-negative reals.  
For $n \in \Nat$, let $\NATS{n}$ and $\REALS{n}$ denote the sets 
$\set{0, 1, \dots, n}$, and
$\set{r \in \Real \, | \, 0 {\leq} r {\leq} n}$ respectively.
For $x {=} (x_1, \ldots, x_n) \in \Real^n$, we
define $\InfNorm{x} = \max \set{|x_i| \, | \, 1 \leq i \leq n}$.
\vspace*{-6pt}
\subsubsection*{Probability distributions.}
A \emph{discrete probability distribution} over a countable set $Q$ is
a function $\mu: Q {\to} [0, 1]$ such that $\sum_{q \in Q} \mu(q) {=} 1$.
For a possible uncountable set $Q'$, we define $\DIST(Q')$ to be the set of
functions $\mu: Q' \to [0,1]$ such that the set $\supp(\mu) {=}\set{q \in Q \, | \,
  \mu(q) {>} 0}$ is countable and, over $\supp(\mu)$,
$\mu$ is a distribution. 
We say that $\mu \in \DIST(Q)$ is a \emph{point distribution}
if $\mu(q) {=} 1$ for some $q \in Q$.
\vspace*{-6pt}
\subsubsection*{Markov decision processes.} We next introduce Markov decision processes a modelling
formalism for systems exhibiting nondeterministic and probabilistic behaviour.
\begin{definition}
  A \emph{Markov decision process} (MDP) is a tuple  $\mdp = (S, F, A, p, \pi)$
  where: 
  \begin{itemize}
  \item
    $S$ is the set of \emph{states} including a set of final states $F$; 
  \item
    $A$ is the set of \emph{actions};
  \item
    $p : S \times A \to \DIST(S)$ is a partial function called the
    \emph{probabilistic transition function};
  \item
    $\pi: S {\times} A \to \Rplus$ is the
    \emph{reward function}. 
  \end{itemize}
\end{definition}
We write $A(s)$ for the set of actions available at $s$, i.e., the
set of actions $a$ for which $p(s, a)$ is defined. 
In an MDP $\Mm$, if the current state is $s$, then there is a non-deterministic choice
between the actions in $A(s)$ and if action $a$ is chosen the probability of reaching
the state $s' \in S$ equals $p(s'|s, a) \rmdef p(s, a)(s')$.
\vspace*{-6pt}
\subsubsection*{Clocks, clock valuations, regions and zones.}
We fix a constant $k \in \Nat$ and finite set of \emph{clocks} $\clocks$.
A ($k$-bounded) \emph{clock valuation} is a function 
$\nu : \clocks \to \REALS{k}$ and we write $V$ for the set of clock valuations. 
\begin{assum}
  Although clocks are usually allowed
  to take arbitrary non-negative values, we have restricted their values to be bounded by the constant $k$. 
  This restriction is for technical convenience and comes without
  significant loss of generality. 
\end{assum}
If~$\nu \in V$ and $t \in \Rplus$ then we write $\nu {+} t$ for the
clock valuation defined by $(\nu {+} t)(c) = \nu(c) {+} t$, for all 
$c \in \clocks$.
For $C \subseteq \clocks$, we write $\nu[C{:=}0]$ for the clock valuation
where $\nu[C{:=}0](c) = 0$ if 
$c \in C$, and $\nu[C{:=}0](c) = \nu(c)$ otherwise.
For $X \subseteq V$, we write $\CLOS{X}$ for the smallest closed set
in~$V$ containing $X$.  
Let $X \subseteq V$ be a convex subset of clock valuations and let $F :
X \to \Real$ be a continuous function.
We write $\CLOS{F}$ for the unique continuous function 
$F' : \CLOS{X} \to \Real$, such that
$F'(\nu) = F(\nu)$ for all $\nu \in X$. 

The set of \emph{clock constraints} over $\clocks$ is the 
set of conjunctions of \emph{simple constraints}, which are constraints
of the form $c \bowtie i$ or $c {-} c' \bowtie i$, where 
$c, c' \in \clocks$, $i \in \NATS{k}$, and 
${\bowtie} \in \{<, >, =, \leq, \geq \}$. 
For every $\nu \in V$, let $\CC(\nu)$ be the set of 
simple constraints which hold in~$\nu$.  
A \emph{clock region} is a maximal set $\region \subseteq V$, such
that  $\CC(\nu) {=} \CC(\nu')$ for all $\nu, \nu' \in \region$.
Every clock region is an equivalence class of the
indistinguishability-by-clock-constraints relation, and vice versa.  
Note that $\nu$ and~$\nu'$ are in the same clock region if and only if
the integer parts of the clocks and the
partial orders of the clocks, determined by their fractional parts,
are the same in $\nu$ and $\nu'$.  
We write $[\nu]$ for the clock region of $\nu$ and, if $\zeta {=}
[\nu]$, write $\region[C{:=}0]$ for the clock region $[\nu[C{:=}0]]$. 

A \emph{clock zone} is a convex set of clock valuations, which  
is a union of a set of clock regions. We write $\zones$ for the set of
clock zones. 
For any clock zone $W$ and clock valuation $\nu$, we use the notation
$\nu \sat W$ to denote that $[\nu] \in W$.
A set of clock valuations is a clock zone if and only if it is
definable by a clock constraint.
Observe that, for every clock zone~$W$, the set $\CLOS{W}$ is also a
clock zone.

\section{Stochastic Games on Probabilistic Timed Automata}
In this section we introduce stochastic games played on probabilistic
timed automata. 
\vspace*{-6pt}
\subsubsection*{Probabilistic timed automata.}
\label{Sec:timed-automata}
Probabilistic timed automata are a modelling framework for
real-time systems exhibiting both nondeterministic
and probabilistic behaviour. The formalism is derived by extending classical
timed automata \cite{AD94} with discrete probability distributions over edges. 
\begin{definition}[PTA syntax]
  A \emph{probabilistic timed automaton} (PTA) is a tuple
  $\pta = (L, \fin , \clocks, \inv, \act, E, \delta)$ where:
\begin{itemize}
\item
  $L$ is the finite set of \emph{locations}  including the set of \emph{final locations} $\fin$;
\item
  $\clocks$ is the finite set of \emph{clocks}; 
\item
  $\inv : L \to \zones$ is the \emph{invariant condition}; 
\item
  $\act$ is the finite set of \emph{actions};
\item 
  $E : L {\times}\act \to \zones$ is the \emph{action enabledness function}; 
\item
  $\delta : (L {\times} \act) \to \DIST(2^{\clocks} {\times} L)$ is the \emph{transition probability function}.
\end{itemize}
\end{definition}
A \emph{timed automaton} is a PTA with the property that  $\delta(\ell, a)$ is a
point distribution for all
$\ell \in L$ and $a \in \act$. 
When we consider a PTA as an input of an 
algorithm, its size should be understood as the sum of the sizes of
encodings of $L$, $\clocks$, $\inv$, $\act$, $E$, and $\delta$.
As usual~\cite{JLS08}, we assume that probabilities are expressed as
ratios of two natural numbers, each written in binary. 
In addition, we assume the following standard  restriction on PTAs which ensures
\emph{time divergent} behaviour. 
\begin{assum}
  We restrict attention to \emph{structurally
    non-Zeno} PTAs~\cite{Tri99,JLS08}.
\end{assum}
A \emph{configuration} of a PTA $\pta$ is a
pair $(\ell, \nu)$, where $\ell \in L$ is a location and $\nu \in V$
is a clock valuation over $\clocks$ such that 
$\nu \sat \inv(\ell)$. For any $t \in \Real$, we let $(\ell,\nu){+}t$
equal the configuration $(\ell,\nu{+}t)$. 
Informally, the behaviour of a PTA is as
follows. 
In configuration $(\ell,\nu)$ time passes before an available
action is triggered, after which a discrete probabilistic transition
occurs.  
Time passage is available only if the invariant condition $\inv(\ell)$
is satisfied while time elapses, and an action $a$ can be chosen after
time $t$ elapses only if it is enabled after time elapse, i.e.,\ 
if $\nu{+}t \sat E(\ell,a)$. 
Both the time and the action chosen are nondeterministic. 
If the action $a$ is chosen, then the probability of moving to the
location $\ell'$ and resetting all of the clocks in $C$ to 0 is given
by $\delta[\ell,a](C, \ell')$. 

Formally, the semantics of a PTA is given by 
an MDP which has both an infinite number of states
and an infinite number of transitions. 
\begin{definition}[PTA semantics]
  Let $\pta = (L, \fin , \clocks, \inv, \act, E, \delta)$ be a
  PTA. 
  The semantics of $\pta$ is the MDP $\sem{\pta} = (S, F, A, p, \pi)$
  where 
  \begin{itemize}
  \item 
    $S \subseteq L {\times} V$, the set of states, 
    is such that $(\ell,\nu) \in S$ if and only if $\nu \sat \inv(\ell)$;
  \item
    $F = S \cap (\fin \times V)$ is the set of final states;
  \item 
    $A = \Rplus {\times} \act$ is the set of \emph{timed actions}; 
  \item
    $p : S \times A \to \DIST(S)$ is the probabilistic transition function
    such that for $(\ell,\nu) \in S$ and $(t,a) \in A$, we
    have $p( (\ell,\nu)  , (t,a) ) = \mu$ if and only if
    \begin{itemize}
    \item
      $\nu {+} t' \sat \inv(\ell)$ for all $t' \in [0, t]$;
    \item
      $\nu {+} t \sat E(\ell,a)$;
    \item
      $\mu((\ell',\nu')) =\sum_{C \subseteq \clocks \wedge
        (\nu{+}t)[C{:=0}]=\nu'} \delta[\ell,a](C,l')$  for all
      $(\ell',\nu') \in S$. 
    \end{itemize}
  \item 
    $\pi : S{\times} A {\to} \Real$ is the reward function where $\pi(s,(t, a)) {=} t$
    for $s \in S$ and $(t, a) \in A$.
  \end{itemize}
\end{definition}
In the rest of the paper, for the sake of notational convenience, we
often write $\mu(\ell, \nu)$ for $\mu((\ell, \nu))$.
\vspace*{-6pt}
\subsubsection*{Probabilistic Timed Game Arena.}
We are now in a position to introduce probabilistic timed game arenas.
\begin{definition}
  A \emph{probabilistic timed game arena} is a triplet 
  $\Tt = (\pta, L_\mMIN, L_\mMAX)$ where 
  $\pta = (L, \fin , \clocks, \inv, \act,  E, \delta)$ is a
  PTA and $(L_\mMIN, L_\mMAX)$ is a
  partition of $L$. 
\end{definition}
The semantics of a probabilistic timed game arena $\Tt$ is the
stochastic game arena $\sem{\Tt}=(\sem{\pta}, S_\mMIN, S_\mMAX)$ where
$\sem{\pta} = (S, A, F, p, \pi)$ is the semantics of $\pta$, and 
$S_\mMIN = S \cap  (L_\mMIN \times V)$ and 
$S_\mMAX = S \setminus S_\mMIN$. 
Intuitively $S_\mMIN$ is the set of states controlled by player Min,
and $S_\mMAX$ is the set of states controlled by player Max.

In a turn-based game on $\Tt$ players Min and Max
move a token along the states of the PTA in the following manner. 
If the current state is $s$, then the
player controlling the state chooses an action $(t, a) \in A(s)$ after
which state $s' \in S$ is reached with probability  
$p(s'|s, a)$.
In the next turn the player controlling the state $s'$ chooses an
action in $A(s')$ and a probabilistic transition is made accordingly. 

We say that $(s, (t, a), s')$ is a transition in $\Tt$ if 
$p(s' | s, (t, a)) {>} 0$ and a \emph{play} of $\Tt$ is a sequence  
$\seq{s_0, (t_1, a_1), s_1, \ldots} \in S {\times} (A {\times} S)^*$
such that $(s_i, (t_{i+1}, a_{i+1}), s_{i+1})$ is a transition for all
$i {\geq} 0$. 
We write $\RUNS$ ($\FRUNS$) for the sets of infinite (finite) 
plays and $\RUNS_s$ ($\FRUNS_s$) for the sets of infinite (finite)
plays starting from state~$s$.  
For a finite play $r$ let
$\LAST(r)$ denote the last state of the play. 
Let $X_i$ and $Y_i$ denote the random variables
corresponding to $i^{\text{th}}$ state and action of a play. 

A \emph{strategy} of player Min in $\Tt$ is a partial function 
$\mu: \FRUNS \to \DIST(A)$, defined for $r \in \FRUNS$ 
if and only if $\LAST(r) \in S_\mMIN$, such that
$\supp(\mu(r)) \subseteq A(\LAST(r))$.  
Strategies of player Max are defined analogously.
We write $\Sigma_\mMIN$ and $\Sigma_\mMAX$ for the set of
strategies of players Min and Max, respectively.
Let $\RUNS^{\mu, \chi}_s$ denote the subset of $\RUNS_s$
which corresponds to the set of plays in which
players play according to $\mu \in \Sigma_\mMIN$ and $\chi \in
\Sigma_\mMAX$, respectively. 
A strategy $\sigma$ is \emph{pure} if $\sigma(r)$ is a 
point distribution for all  $r \in \FRUNS$ for which it is defined,
while it is \emph{stationary} if 
$\LAST(r) {=} \LAST(r')$ implies $\sigma(r) {=} \sigma(r')$ for all  
$r, r' \in \FRUNS$. 

To analyse the behaviour of a stochastic game on $\Tt$ under a
strategy pair $(\mu,\chi)$, for
every state $s$ of $\Tt$, we define a probability space 
$(\RUNS^{\mu, \chi}_s, \mathcal{F}_{\RUNS^{\mu, \chi}_s},
\PROB^{\mu, \chi}_s)$ over the set of infinite plays under
strategies $\mu$ and $\chi$ with $s$ as the initial state. 
Given a \emph{real-valued random variable} $f : \RUNS \to \Real$, we
can then define the expectation of this variable 
$\eE^{\mu, \chi}_s\set{f}$ with respect to strategy pair $(\mu,
\chi)$ when starting in $s$.  

For technical convenience we make the following standard~\cite{NS04}
assumption (a similar assumption is required for optimal expected
reachability price problem for finite MDP~\cite{dA99}):  
\begin{assum}
  \label{assum:proper-strategy}
  For every strategy pair $\mu \in \Sigma_\mMIN$, 
  $\chi \in \Sigma_\mMAX$, and state $s \in S$ we have that 
  $\lim_{i \to \infty} \PROB^{\mu, \chi}_s(X_i \in F) = 1$. 
\end{assum}
\vspace*{-6pt}
\subsubsection*{Expected Reachability-Time Game.} In an expected reachability-time
game on $\Tt = (\pta, L_\mMIN, L_\mMAX)$ player Min attempts to reach the
final states as quickly as possible, while the objective of player Max
is the opposite.
More precisely, Min is interested in minimising her losses, while player Max is
interested in maximising his winnings where, if player Min uses the
strategy $\mu \in \Sigma_\mMIN$ and player Max uses the strategy 
$\chi \in \Sigma_\mMAX$, player Min loses the following amount to player Max:
\[
\EREACHPRICE(s, \mu, \chi) \rmdef \eE^{\mu, \chi}_s
\left\{\mbox{$\sum_{i=1}^{\min \set{i \, | \, X_i \in F}}$} 
  \pi(X_{i-1}, Y_{i}) \right\}.
\]
Observe that player Max can choose his actions to win at least an
amount arbitrarily close to 
$\sup_{\chi \in \Sigma_{\mMAX}} \inf_{\mu \in \Sigma_{\mMIN}}
\EREACHPRICE(s, \mu, \chi)$.
This is called the \emph{lower value} $\LVAL(s)$
of the expected reachability-time game starting at $s$:
\[ \begin{array}{c}
\LVAL(s)  \rmdef \sup_{\chi \in \Sigma_\mMAX} \inf_{\mu \in
  \Sigma_\mMIN} \EREACHPRICE(s, \mu, \chi) \, .
\end{array} \]
Similarly, player Min can choose to lose at most an amount
arbitrarily close to 
$\inf_{\mu \in \Sigma_{\mMIN}} \sup_{\chi \in \Sigma_{\mMAX}} 
\EREACHPRICE(s, \mu, \chi)$.   
This is called the \emph{upper value} $\UVAL(s)$ of
the game: 
\[ \begin{array}{c}
\UVAL(s) \rmdef \inf_{\mu \in \Sigma_\mMIN} \sup_{\chi \in
  \Sigma_\mMAX} \EREACHPRICE(s, \mu, \chi) \, . 
\end{array} \]
It is straightforward to verify that 
$\LVAL(s) \leq \UVAL(s)$ for all $s \in S$.  
We say that the expected reachability-time game is determined if 
$\LVAL(s) = \UVAL(s)$ for all $s \in S$. 
In this case we also say that the value of the game exists and denote 
it by $\VAL(s) = \LVAL(s) = \UVAL(s)$ for all $s \in S$. 
The results of this paper present a  proof of the following proposition.
\begin{proposition}
Expected reachability-time games are determined.
\end{proposition}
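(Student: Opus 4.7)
The plan is to show that both $\LVAL$ and $\UVAL$ are solutions of the same Shapley--Bellman recursion, and that this recursion admits a unique bounded solution under Assumption \ref{assum:proper-strategy}, so $\LVAL(s) = \UVAL(s)$ for every $s \in S$. This mirrors the classical argument for stochastic shortest path problems on finite MDPs, lifted to a game setting with an uncountable action space.

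First, I would use Assumption \ref{assum:proper-strategy} to argue that for every state $s$ and every $(\mu,\chi) \in \Sigma_\mMIN \times \Sigma_\mMAX$ the accumulated reward $\sum_{i \geq 1} \pi(X_{i-1},Y_i)$ is $\PROB^{\mu,\chi}_s$-integrable, and hence both $\LVAL(s)$ and $\UVAL(s)$ are finite. A standard one-step unfolding of $\EREACHPRICE$, combined with the concatenation of $\epsilon$-optimal continuation strategies, then shows that on $S \setminus F$ the value $\UVAL$ satisfies $\UVAL(s) = \inf_{(t,a) \in A(s)}\bigl[t + \sum_{s'} p(s'|s,(t,a))\,\UVAL(s')\bigr]$ when $s \in S_\mMIN$ and $\UVAL(s) = \sup_{(t,a) \in A(s)}\bigl[t + \sum_{s'} p(s'|s,(t,a))\,\UVAL(s')\bigr]$ when $s \in S_\mMAX$, with $\UVAL = 0$ on $F$; a symmetric argument shows the same recursion for $\LVAL$.

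Second, I would argue uniqueness of the bounded solution. Assumption \ref{assum:proper-strategy}, together with the structural non-Zenoness assumption on $\pta$, forces a uniform geometric decay of $\PROB^{\mu,\chi}_s(X_i \notin F)$ in the number of discrete steps; in a suitable weighted sup-norm, the Shapley operator defined by the above recursion is then a contraction, so any two bounded fixed points coincide. Since both $\LVAL$ and $\UVAL$ are bounded fixed points, we conclude $\LVAL = \UVAL$, i.e.\ the game is determined and $\VAL$ exists.

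The main obstacle is handling the uncountable timed-action space $A(s) \subseteq \Rplus \times \act$. Passing $\inf$ and $\sup$ through the expectation $\eE^{\mu,\chi}_s$ and through the strategy quantifiers requires a measurable-selection argument, and the contraction constant in the uniqueness step must be derived uniformly over the continuous timing choices. The boundary region graph construction of \cite{JT08}, announced as the paper's main technical tool, is the natural device for folding the timing non-determinism into a finite combinatorial object, after which classical determinacy for finite stochastic games gives an alternative constructive route to the same conclusion.
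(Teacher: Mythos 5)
Your overall plan --- characterise $\LVAL$ and $\UVAL$ as fixed points of the Shapley--Bellman operator and then invoke uniqueness of the bounded fixed point --- is a genuinely different decomposition from the paper's, but it has a gap at exactly the step that carries all the difficulty. Assumption~\ref{assum:proper-strategy} only states that $\lim_{i\to\infty}\PROB^{\mu,\chi}_s(X_i\in F)=1$ \emph{for each} strategy pair; it provides no uniformity over the uncountably many strategies induced by the continuum of timed actions in $A(s)$. Consequently the ``uniform geometric decay of $\PROB^{\mu,\chi}_s(X_i\notin F)$'' that your uniqueness step needs does not follow (structural non-Zenoness controls time divergence, not absorption into $F$), and without it the operator need not be a contraction in any weighted sup-norm and bounded fixed points need not be unique. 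Even your opening claim that $\LVAL$ and $\UVAL$ are finite is not implied by the assumption: a pointwise limit $\PROB^{\mu,\chi}_s(X_i\in F)\to 1$ is compatible with an infinite expected hitting time. You correctly flag the measurable-selection and uniform-contraction issues as ``the main obstacle,'' but you do not resolve them, and they are not routine here.

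The paper's proof avoids both problems by reversing the logical order. It never shows directly that $\LVAL$ or $\UVAL$ satisfies the optimality equations. Instead, Proposition~\ref{proposition:opt-strategies-from-opt-eqn} shows that \emph{any} solution $P\models\Opt(\Tt)$ is squeezed between the two values, by explicitly constructing from $P$ pure $\varepsilon$-optimal strategies for each player (allotting error $\varepsilon/2^{n+1}$ at step $n$ and passing to the limit using Assumption~\ref{assum:proper-strategy} only in its pointwise form). Existence of such a $P$ is then the content of the rest of the paper: value iteration on the boundary region graph converges because the reachable subgraph is finite (Lemma~\ref{lemma:reachable-subgraph-is-finite-MDP}) --- which is where a contraction argument is actually legitimate --- its fixpoint is regionally quasi-simple (Proposition~\ref{prop:quasi-simple-brg}), and Theorem~\ref{main-thm} transfers it back to a solution of $\Opt(\Tt)$. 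So the boundary region graph is not an ``alternative constructive route'' as your last sentence suggests: it is the route on which the determinacy claim actually rests, and your sketch would need it (or some substitute supplying the missing uniformity) to go through.
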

For $\mu \in \Sigma_\mMIN$ and $\chi \in \Sigma_\mMAX$ we 
define $\VAL^\mu(s) = \sup_{\chi \in \Sigma_\mMAX} \EREACHPRICE(s,
\mu, \chi)$
and 
$\VAL_\chi(s) = \inf_{\mu \in \Sigma_\mMIN} \EREACHPRICE(s, \mu, \chi)$.
For an $\varepsilon {>} 0$, we say that $\mu \in \Sigma_\mMIN$ or
$\chi \in \Sigma_\mMAX$ is \emph{$\varepsilon$-optimal} if
$\VAL^\mu(s) {\leq} \VAL(s) {+} \varepsilon$  or
$\VAL_\chi(s) {\geq} \VAL(s) {-} \varepsilon$, respectively, for all
$s \in S$.  
If an expected reachability-time game is determined, then for
every $\varepsilon {>} 0$, both players have $\varepsilon$-optimal
strategies. 
\vspace*{-6pt}
\subsubsection*{Optimality Equations.}
We now review optimality equations for characterising the value in an
expected reachability-time game.
Let $\Tt$ be a probabilistic timed game arena and let 
$P : S \to \Rplus$.
We say that $P$ is a solution of optimality equations $\Opt(\Tt)$, and
we write $P \models \Opt(\Tt)$ if, for all $s \in S$:
\[
  P(s) = \left\{
  \begin{array}{cl}
    0 & \text{if $s \in F$}\\
    \inf_{(t, a) \in A(s)} \{ t + \sum_{s' \in S} p(s'| s, (t, a))  \cdot P(s') \}
    & \text{if $s \in S_\mMIN{\setminus} F$}\\
    \sup_{(t, a) \in A(s)} \{ t + \sum_{s' \in S} p(s'| s, (t, a))  \cdot P(s') \} 
    & \text{if $s \in S_\mMAX{\setminus} F$}
  \end{array} \right.
\]
Under Assumption~\ref{assum:proper-strategy},
the proof of the following proposition is routine and for details see,
for example, \cite{JT07}. 
\begin{proposition}
  \label{proposition:opt-strategies-from-opt-eqn}
  If $P \models \Opt(\Tt)$, then 
  $\VAL(s) = P(s)$ for all $s \in S$ and, for every $\varepsilon {>} 0$, both 
  players have \emph{pure} $\varepsilon$-optimal strategies.   
\end{proposition}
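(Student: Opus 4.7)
The plan is a sandwich argument. I will construct pure strategies $\mu_\varepsilon \in \Sigma_\mMIN$ and $\chi_\varepsilon \in \Sigma_\mMAX$ satisfying $\VAL^{\mu_\varepsilon}(s) \leq P(s) + \varepsilon$ and $\VAL_{\chi_\varepsilon}(s) \geq P(s) - \varepsilon$ for every $s \in S$. Since $\LVAL(s) \leq \UVAL(s)$ holds by definition, these two bounds together force $P(s) - \varepsilon \leq \LVAL(s) \leq \UVAL(s) \leq P(s) + \varepsilon$ for every $\varepsilon > 0$, yielding $\LVAL(s) = \UVAL(s) = P(s)$. This simultaneously establishes determinacy with $\VAL(s) = P(s)$ and exhibits the claimed pure $\varepsilon$-optimal strategies.

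First I would build the strategies explicitly. Fix $\varepsilon > 0$ and positive reals $\varepsilon_1, \varepsilon_2, \ldots$ with $\sum_{i \geq 1} \varepsilon_i \leq \varepsilon$. For any $r \in \FRUNS$ of length $n$ with $s = \LAST(r) \in S_\mMIN \setminus F$, define $\mu_\varepsilon(r)$ to be the point distribution on some $(t,a) \in A(s)$ with
\[
t + \sum_{s' \in S} p(s'| s, (t,a)) \cdot P(s') \leq P(s) + \varepsilon_n ;
\]
such an action exists because $P \models \Opt(\Tt)$ forces the infimum on the left to equal $P(s)$. Define $\chi_\varepsilon$ symmetrically at Max states using the corresponding supremum. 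Both strategies are pure by construction.

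Next I would run a potential-function martingale argument on the Min side. Fix any $\chi \in \Sigma_\mMAX$, let $\tau = \min\{i \geq 0 : X_i \in F\}$ (finite almost surely by Assumption~\ref{assum:proper-strategy}), and consider
\[
M_n = \sum_{i=1}^{n \wedge \tau} \pi(X_{i-1}, Y_i) + P(X_{n \wedge \tau}) - \sum_{i=1}^{n} \varepsilon_i .
\]
A one-step conditional-expectation computation shows $\{M_n\}$ is a supermartingale under $\PROB^{\mu_\varepsilon, \chi}_s$: on $\{\tau > n\}$, the choice prescribed by $\mu_\varepsilon$ handles Min states while $P \models \Opt(\Tt)$ handles Max states (any action at a Max state satisfies the defining sup-inequality); on $\{\tau \leq n\}$ the increment is $-\varepsilon_{n+1} \leq 0$. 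Hence $\eE^{\mu_\varepsilon, \chi}_s\{M_n\} \leq P(s)$. Because $P \geq 0$ lets us drop $P(X_{n \wedge \tau})$ from the upper bound, monotone convergence applied to the non-decreasing non-negative partial sums gives $\EREACHPRICE(s, \mu_\varepsilon, \chi) \leq P(s) + \varepsilon$, and taking $\sup_\chi$ yields $\VAL^{\mu_\varepsilon}(s) \leq P(s) + \varepsilon$.

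The main obstacle is the symmetric argument for $\chi_\varepsilon$. The analogous submartingale identity gives, for each $n$, $\EREACHPRICE(s, \mu, \chi_\varepsilon) \geq P(s) - \varepsilon - \eE^{\mu, \chi_\varepsilon}_s\{P(X_{n \wedge \tau})\}$, and the correction now has the wrong sign to be simply discarded. Using $P|_F \equiv 0$ (forced by $\Opt(\Tt)$), this reduces to showing $\eE^{\mu, \chi_\varepsilon}_s\{P(X_n) \cdot \mathbf{1}_{\tau > n}\} \to 0$, which requires a uniform-integrability or boundedness argument on $P$ beyond the bare assumptions. The cleanest route is to first establish a pointwise upper bound on $P$ over the reachable part of the arena, using the per-step reward bound $\pi \leq k$ (since clocks are $k$-bounded) together with Assumption~\ref{assum:proper-strategy}, and then apply dominated convergence, paralleling the treatment for finite MDPs cited as \cite{dA99}. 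Once the limit is justified, $\VAL_{\chi_\varepsilon}(s) \geq P(s) - \varepsilon$ follows, completing the sandwich.
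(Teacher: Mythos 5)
Your proposal is correct and takes essentially the same approach as the paper's proof: the same pure strategy construction with per-step slack $\varepsilon_n$ summing to $\varepsilon$ (the paper uses $\varepsilon/2^{n+1}$), the same one-step inequality driven by $P \models \Opt(\Tt)$ (your supermartingale $M_n$ is the paper's induction in different clothing), and the same appeal to Assumption~\ref{assum:proper-strategy} plus $P \geq 0$ to pass to the limit on the Min side. The one place you go beyond the paper is in explicitly flagging that the Max-side limit needs $\eE^{\mu,\chi_\varepsilon}_s\{P(X_n)\cdot\mathbf{1}_{\tau>n}\}\to 0$, hence some boundedness or uniform-integrability of $P$; the paper dismisses this direction with ``follows similarly,'' so your added care identifies a real (if minor) gap in the published argument rather than introducing one of your own.
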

Using Proposition~\ref{proposition:opt-strategies-from-opt-eqn}, it follows that the problem of
solving an expected reachability-time game on $\Tt$ can be reduced to
solving the optimality equations $\Opt(\Tt)$.
In the non-probabilistic setting, Jurdzi{\'n}ski and
Trivedi~\cite{JT07}  showed that solving optimality equations for a
reachability-time game on a (non-probabilistic) timed automaton $\Tt$
can be reduced to solving a reachability-price game on an abstraction,
called the boundary region graph.
Recently~\cite{JKNT09}, we extended this result
reducing a number of
one-player optimisation problems on probabilistic timed automata to
solving corresponding problems on boundary region graphs.
In the next section, we review boundary region graph abstraction for
probabilistic timed automata and, in Section~\ref{sec:correctness}, we
argue that boundary region graph abstraction for probabilistic timed
automata is sufficient to solve expected reachability-time games.
In Section~\ref{sec:discounted-games}, we explain that expected
discounted-time games on probabilistic timed automata can also be
reduced to solving discounted-price games on their boundary region
graph. 
In Section~\ref{sec:complexity}, we discuss some implications of these
reductions on the complexity of the decision problems related to these
games.

\section{The Boundary Region Graph Abstraction} 
\label{sec:region-graph-construction}
In this section we review the boundary region graph for PTAs introduced in \cite{JKNT09}.
\vspace*{-6pt}
\subsubsection*{Regions.}
A~\emph{region} is a pair $(\ell, \region)$, where $\ell$ is a
location and $\region$ is a clock region such that 
$\region \subseteq \inv(\ell)$.
For any $s{=} (\ell, \nu)$, we write $[s]$ for the
region $(\ell, [\nu])$ and $\Rr$ for the set of regions.
A set $Z \subseteq L {\times}V$ is a \emph{zone} if, for every 
$\ell \in L$,
there is a clock zone $W_\ell$ (possibly empty), such that  
$Z = \set{(\ell, \nu) \, | \, \ell \in L \wedge \nu \sat W_\ell}$.  
For a region $R {=} (\ell, \region) \in \Rr$, we write $\CLOS{R}$ for
the zone $\set{(\ell, \nu) \, | \, \nu \in \CLOS{\region}}$, recall
$\CLOS{\region}$ is the smallest closed set in~$V$ containing 
$\region$. 

For $R, R' \in \Rr$, we say that $R'$ is in the future of $R$, 
or that $R$ is in the past of $R'$, if there is $s \in R$,
$s' \in R'$ and $t \in \Rplus$ such that $s' = s{+}t$; 
we then write $R \rightarrow_* R'$.
We say that $R'$ is the \emph{time successor} of $R$ if 
$R \rightarrow_* R'$, $R {\neq} R'$, and 
$R \rightarrow_* R'' \rightarrow_* R'$ implies 
$R'' {=} R$ or $R'' {=} R'$ and write $R \rightarrow_{+1} R'$ and
$R' \leftarrow_{+1} R$.

We say that a region $R \in \Rr$ is \emph{thin} if $[s] \not= [s{+}\varepsilon]$ for every $s \in R$ 
and $\varepsilon {>} 0$;
other regions are called \emph{thick}.
We write $\Rr_\THIN$ and $\Rr_\THICK$ for the sets of thin and thick
regions, respectively. 
Note that if $R \in \Rr_\THICK$ then, for every $s \in R$, there is an
$\varepsilon > 0$, such that $[s] = [s {+} \varepsilon]$.  
Observe that the time successor of a thin region is thick, and 
vice versa. 

We say $(\ell,\nu) \in L {\times} V$ is in the \emph{closure of
  the region} $(\ell,\region)$, and we write 
$(\ell,\nu) \in \CLOS{(\ell,\region)}$, if $\nu \in \CLOS{\region}$. 
For any $\nu \in V$, $b \in \NATS{k}$ and $c \in \clocks$
such that $\nu(c) {\leq} b$, we let $\mathit{time}(\nu, (b,c)) \rmdef b {-} \nu(c)$.
Intuitively, $\mathit{time}(\nu, (b, c))$ returns the amount of
time that must elapse in $\nu$ before the clock $c$
reaches the integer value $b$. 
Note that, for any $(\ell,\nu) \in L {\times} V$ and $a \in \act$,
if $t=\mathit{time}(\nu,(b,c))$ is defined, then 
$(\ell,[\nu{+}t]) \in \Rr_\THIN$ and $\supp(p_{\pta}(\cdot \,|\, (\ell,\nu),(t,a))) \subseteq
\Rr_\THIN$. 
Observe that, for every $R' \in \Rr_\THIN$, there is a
number $b \in \NATS{k}$ and a clock $c \in \clocks$, such that, for every  
$R \in \Rr$ in the past of $R'$, we have $s \in R$ implies 
$(s {+} (b {-} s(c)) \in R'$;  
and we write $R \rightarrow_{b, c} R'$.
\vspace*{-6pt}
\subsubsection*{The Boundary Region Graph.}
The boundary region graph is motivated by the following. 
Consider any $a \in \act$, $s = (\ell,\nu)$ and 
$R =(\ell, \region) \to_* R' = (\ell, \region')$
such that $s \in R$ and $R' \sat E(\ell, a)$.
\begin{itemize}
\item
  If $R' \in \Rr_\THICK$,  then there are infinitely many $t \in \Rplus$ such that 
  $s {+}t \in R'$.
 However, amongst all such $t$'s, for one of the boundaries of $\region'$, the
closer $\nu {+} t$ is to this boundary, the `better' the
timed action $(t, a)$ becomes for a player's objective.
However, since $R'$ is a thick region,
the set $\set{t \in \Rplus \, | \, s {+} t \in R'}$
is an open interval, and hence does not contain its boundary values.
Observe that the infimum equals $b_{-} {-} \nu(c_{-})$ where 
$R \to_{b_{-}, c_{-}} R_{-} \rightarrow_{+1} R'$
and the supremum equals $b_{+} {-} \nu(c_{+})$ where
$R \to_{b_{+}, c_{+}} R_{+} \leftarrow_{+1} R'$.
In the boundary region graph we include these `best'  timed
actions through
the actions $((b_{-}, c_{-}, a), R')$ and 
$((b_{+},c_{+}, a), R')$. 
\item
If $R' \in \Rr_\THIN$, then there
exists a unique $t \in \Rplus$ such that $(\ell, \nu{+}t) \in R'$.
Moreover since $R'$ is a thin region, there exists a clock $c \in C$
and a number $b \in \Nat$ such that $R \to_{b, c} R'$ and $t = b {-} \nu(c)$.
In the boundary region graph we summarise this `best' timed action from region $R$ via region
$R'$ through the action $((b, c, a), R')$.
\end{itemize}
Based on this intuition the
boundary region graph is defined as follows.
\begin{definition}\label{brg-def}
  Let $\pta = (L, \fin, \clocks, \inv, \act, E, \delta)$ be a
  PTA. 
  The \emph{boundary region graph} of $\pta$ is defined as
  the MDP $\RegB{\pta} = ( \hS, \hF , \hA, \hp, \hpi)$ where 
  \begin{itemize}
  \item
    $\hS = \{ ((\ell,\nu),(\ell,\region)) \, | \, 
    (\ell,\region) \in \Rr \wedge \nu \in \CLOS{\region} \}$
    and
    $\hF  = \{ ((\ell,\nu),(\ell,\region)) \in \hS \, | \,  \ell \in  \fin \}$; 
  \item
    the finite set of boundary actions 
    $\hA \subseteq (\NATS{k} {\times} \clocks {\times} \act) {\times}
    \Rr$ and for $R \in \Rr$ we let\footnote{Notice that
      $\hA(R) = \hA(s)$ for all $s = ((\ell, \nu), R) \in \hS$.} $\hA(R) = 
    \{\alpha \in \hA((\ell, \nu), R) \, | \, ((\ell, \nu), R) \in \hS
    \}$;
  \item 
    for any state $((\ell,\nu),(\ell,\region)) \in \hS$ and action
    $((b, c, a),(\ell,\zeta_a)) \in \hA$ we have 
    $\hp( (\ell,\nu),(\ell,\region)  , ((b, c, a),(\ell,\zeta_a)) )
    = \mu$ if and only if 
    \[
    \mu((\ell',\nu'),(\ell',\region')) =
    \mbox{$\sum_{C \subseteq \clocks \wedge 
      \nu_a[C{:=}0]=\nu' \wedge 
      \zeta_a[C{:=}0]=\region'}$} \delta[\ell,a](C,l')
    \]
    for all $((\ell',\nu'),(\ell',\region')) \in \hS$ where 
    $\nu_a = \nu {+} \mathit{time}(\nu, (b,c))$ and one of
    the following conditions holds: 
    \begin{itemize}
    \item 
      $(\ell,\region) \rightarrow_{b, c} (\ell,\zeta_a)$ 
      and $\zeta_a \sat E(\ell,a)$
    \item 
      $(\ell,\region) \rightarrow_{b, c} 
      (\ell,\region_{-}) \rightarrow_{+1} (\ell,\zeta_a)$ for
      some $(\ell,\region_{-})$ and $\zeta_a \sat E(\ell,a)$
    \item 
      $(\ell,\region) \rightarrow_{b, c} 
      (\ell,\region_{+}) \leftarrow_{+1} (\ell,\zeta_a)$ for
      some $(\ell,\region_{+})$ and $\zeta_a \sat E(\ell,a)$. 
    \end{itemize}
  \item
    $\hpi: \hS \times \hA \to \Real$ is such that for
    $((\ell,\nu),(\ell,\region)) \in \hS$ and $((b, c, a), R) \in
    \hA(((\ell,\nu),(\ell,\region)))$ we have 
    $\hpi(((\ell,\nu),(\ell,\region)), ((b, c, a), R)) = b - \nu(c)$. 
  \end{itemize}
\end{definition}
Although the boundary region graph is infinite,
for a fixed initial state we can restrict attention to a finite state
subgraph, thanks to the following observation \cite{JT08}.
\begin{lemma}
  \label{lemma:reachable-subgraph-is-finite-MDP}
  For any state of a boundary region graph, its
  reachable sub-graph is finite.
\end{lemma}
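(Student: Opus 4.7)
The plan is to show that the set of clock valuations that can appear in a reachable state has a uniform structural constraint: the fractional parts of the clock values are drawn from a finite set determined entirely by the initial state. Once this is established, boundedness of the clocks by $k$ bounds the integer parts, hence the whole valuation lies in a finite set; combined with the finitely many locations and regions, and the finite action set $\hA(R)$ available at any state of region $R$, this yields a finite reachable MDP.

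Fix an initial state $s_0 = ((\ell_0,\nu_0),(\ell_0,\region_0))$ and let $G \rmdef \{\mathit{frac}(\nu_0(c) - \nu_0(c')) \bmod 1 : c,c' \in \clocks\} \cup \{\mathit{frac}(\nu_0(c)) : c \in \clocks\} \cup \{0\}$, a finite set. I would then prove by induction on the length of a play from $s_0$ that every reachable state $((\ell,\nu),R)$ satisfies $\mathit{frac}(\nu(c)) \in G$ for every $c \in \clocks$. The base case is immediate. For the inductive step, consider a transition from $((\ell,\nu),R)$ under a boundary action $((b,c,a),(\ell,\zeta_a))$. Time elapses by $t = b - \nu(c)$, so the intermediate valuation $\nu_a = \nu + t$ has $\mathit{frac}(\nu_a(c')) = \mathit{frac}(\nu(c') - \nu(c)) \bmod 1$ for every $c'$. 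By the induction hypothesis, both $\mathit{frac}(\nu(c'))$ and $\mathit{frac}(\nu(c))$ lie in $G$; writing each as $\mathit{frac}(\nu_0(c_1) - \nu_0(c_2))$ (with the convention that singletons and 0 are also expressible this way), their difference modulo $1$ collapses to some $\mathit{frac}(\nu_0(c_i) - \nu_0(c_j))$, which is in $G$. Successor states then apply $\nu_a[C{:=}0]$, which either leaves a fractional part unchanged (still in $G$) or sets it to $0 \in G$. Hence every clock value in a reachable support is of the form $m + f$ with $m \in \NATS{k}$ and $f \in G$, giving only finitely many valuations.

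From this pointwise claim, the reachable portion of the state space is contained in the finite set $L \times (\NATS{k} + G)^{\clocks} \times \Rr$ intersected with $\hS$, hence is finite. At each reachable state $((\ell,\nu),R)$, the enabled boundary actions $\hA(R)$ form a subset of the finite set $(\NATS{k} \times \clocks \times \act) \times \Rr$, and each action has a support contained in the (reachable, hence finite) state set. Therefore the reachable sub-MDP has finitely many states, finitely many actions per state, and finitely many transitions, proving the lemma.

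The main obstacle is the inductive closure argument for the fractional parts: one must verify that the algebraic identity $\mathit{frac}((\nu_0(c_1)-\nu_0(c_2)) - (\nu_0(c_3)-\nu_0(c_2))) = \mathit{frac}(\nu_0(c_1)-\nu_0(c_3))$ continues to work after arbitrarily many shifts interleaved with resets, so that iteration of the construction does not enlarge the set beyond $G$. Handling the resets, which replace some clock's fractional part with $0$ and therefore act as a "new" reference point in future shifts, is the delicate case, but the observation that shifting by a reset clock $c$ with $\mathit{frac}(\nu(c)) = 0$ is the identity on fractional parts (and that $0 \in G$) makes the induction go through cleanly.
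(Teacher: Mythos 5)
Note first that the paper itself does not prove this lemma---it is imported from \cite{JT08}---so there is no in-paper proof to compare against and I am judging your argument on its own terms. Your overall strategy is the right one (and is essentially the standard one): show that the fractional parts of reachable valuations stay in a finite set determined by $\nu_0$, then combine the $k$-bound on integer parts with the finiteness of $L$, $\Rr$ and $\hA$. The gap is in the inductive step. Your induction hypothesis is the pointwise statement ``every $\mathit{frac}(\nu(c))$ lies in $G$'', and the step requires $\mathit{frac}(\nu(c')-\nu(c))\in G$ whenever $\mathit{frac}(\nu(c')),\mathit{frac}(\nu(c))\in G$. But $G$ is not closed under differences modulo $1$: if $\mathit{frac}(\nu(c'))=\mathit{frac}(\nu_0(c_1)-\nu_0(c_2))$ and $\mathit{frac}(\nu(c))=\mathit{frac}(\nu_0(c_3)-\nu_0(c_4))$ with $c_2\neq c_4$, their difference is $\mathit{frac}(\nu_0(c_1)-\nu_0(c_2)-\nu_0(c_3)+\nu_0(c_4))$, which need not equal any $\mathit{frac}(\nu_0(c_i)-\nu_0(c_j))$. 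The sentence ``their difference modulo $1$ collapses to some $\mathit{frac}(\nu_0(c_i)-\nu_0(c_j))$'' is precisely the point that needs proof, and the pointwise hypothesis is too weak to deliver it; your closing observation about reset clocks only disposes of the special case in which the reference clock of the next transition was previously reset.

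The fix is to strengthen the induction hypothesis so that it records a \emph{common reference}: at every reachable state there is a single $r\in H:=\{\mathit{frac}(\nu_0(a)) \mid a\in\clocks\}\cup\{0\}$ such that $\mathit{frac}(\nu(c'))=\mathit{frac}(h_{c'}-r)$ for some $h_{c'}\in H$, for \emph{every} clock $c'$ simultaneously. This holds initially with $r=0$ and $h_{c'}=\mathit{frac}(\nu_0(c'))$, and a transition with boundary action $((b,c,a),\cdot)$ simply replaces the reference $r$ by $h_c$: non-reset clocks keep their $h_{c'}$, since $\mathit{frac}\bigl((h_{c'}-r)-(h_c-r)\bigr)=\mathit{frac}(h_{c'}-h_c)$, and reset clocks get $h_{c''}:=h_c$ so that their fractional part $0=\mathit{frac}(h_c-h_c)$ fits the pattern. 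With this strengthened invariant the differences genuinely telescope, the set of reachable fractional parts has size at most $|H|^2$, and the rest of your argument (finitely many valuations, hence finitely many states, with the finitely many boundary actions and finite supports) goes through as you wrote it.
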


\section{Solving PTA Games on the Boundary Region Graph.}
We now show that the boundary region graph abstraction for PTAs is
sufficient to solve the expected reachability-time games.  
The partition of the locations of a probabilistic timed game arena
$\Tt = (\pta, L_\mMIN, L_\mMAX)$ gives rise to the partition
$(\hS_\mMIN, \hS_\mMAX)$ of the set of states $\hS$ of its boundary
region graph and let $\hTt = (\RegB{\pta}, \hS_\mMIN, \hS_\mMAX)$.

We begin by reviewing the optimality equations for an
expected reachability-time game on a boundary region graph $\hTt$.
Let $P : \hS \to \Rplus$.
We say that $P$ is a solution of optimality equations $\Opt(\hTt)$, and
we write $P \models \Opt(\hTt)$, if for any $s \in \hS$:
\[
P(s) = \left\{
  \begin{array}{cl}
    0 & \text{if $s \in \hF$}\\
    \min_{\alpha \in \hA(s)} \{  t(s, \alpha) + \sum_{s' \in S} p(s'| s, \alpha)  \cdot P(s') \}
    & \text{if $s \in \hS_\mMIN {\setminus} \hF$}\\
    \max_{\alpha \in \hA(s)} \{  t(s, \alpha) + \sum_{s' \in S} p(s'| s, \alpha)  \cdot P(s') \} 
    & \text{if $s \in \hS_\mMAX {\setminus} \hF$.}
  \end{array} \right.
\]
Before trying to solve $\Opt(\hTt)$ for a probabilistic timed game, let us consider the simpler case when $\Tt$ is a timed game. 
\vspace*{-6pt}
\subsubsection*{The non-probabilistic case.}
For a timed game $\Tt$ we define 
$\SUCC:\hS {\times} \hA \to \hS$ as follows:
\[
\SUCC(((\ell, \nu), R), ((b, c, a), (\ell, \region))) =
((\ell', (\nu {+} b {-} \nu(c))[C{:=}0]), (\ell', \region[C{:=}0])),
\] 
where $(C, \ell') \in \powC \times L$ is such that 
$\delta(\ell, a)(C, \ell') = 1$.
Now, using this function, the optimality equations $\Opt(\hTt)$ can
be rewritten as:  
\[
P(s) = \left\{
  \begin{array}{cl}
    0 & \text{if $s \in \hF$}\\
    \min_{\alpha \in \hA(s)} \{  t(s, \alpha) +  P(\SUCC(s, \alpha)) \}
    & \text{if $s \in \hS_\mMIN {\setminus} \hF$}\\
    \max_{\alpha \in \hA(s)} \{  t(s, \alpha) +  P(\SUCC(s, \alpha)) \} 
    & \text{if $s \in \hS_\mMAX {\setminus} \hF$.}
  \end{array} \right.
\]
Based on these equations \cite{AM99} introduced the following value
iteration algorithm.
\begin{algorithm}
  \label{algorithm:2-player-vi-nonp}
  {\bf Value iteration algorithm for (non-probabilistic) $\Opt(\hTt)$.} 
  \begin{enumerate}
  \item
    Set $i := 0$,
    $p_0(s) := 0$ if $s \in \hF$ and $p_0(s) := \infty$ otherwise.
  \item
    Set $p_{i+1} := \ImproveA(p_i)$.
  \item 
    If $p_{i+1} = p_i$ then return $p_i$, else set $i := i {+} 1$ and goto step 2.  
  \end{enumerate}
  where $\ImproveA:[\hS \to \Rplus] \to [\hS \to \Rplus]$ is such that
  for any $f : \hS \to \Rplus$ and $s \in \hS$: 
  \begin{equation}
    \label{e:imp}
    \ImproveA(f)(s) = \left\{
      \begin{array}{cl}
        0 & \text{if $s \in \hF$}\\
        \min_{\alpha \in \hA(s)} \{  t(s, \alpha) +  f(\SUCC(s, \alpha)) \}
        & \text{if $s \in \hS_\mMIN {\setminus} \hF$}\\
        \max_{\alpha \in \hA(s)} \{  t(s, \alpha) +  f(\SUCC(s, \alpha)) \} 
        & \text{if $s \in \hS_\mMAX {\setminus} \hF$.}
      \end{array} \right.
  \end{equation}
\end{algorithm}
The proof of correctness of this algorithm is reliant on the concept
of \emph{simple functions}, and certain closure properties of these
functions, which we now review. 
\begin{definition}[Simple Functions]
  Let $X \subseteq V$. 
  A function $F : X \to \Real$ is \emph{simple} if either:
  there is $e \in \Int$, such that 
  $F(\nu) = e$ for every $\nu \in X$; or 
  there are $e \in \Int$ and $c \in C$, such that $F(\nu) = e - \nu(c)$ for all $\nu \in X$.   
\end{definition}
We say a function $F: \hS \to \Rplus$ is regionally simple if for every
region $(\ell, \region) \in \Rr$ the function 
$F((\ell, \cdot), (\ell, \region))$ is simple.
Asarin and Maler~\cite{AM99} showed that the following properties hold for simple functions.
\begin{proposition}[Properties of simple functions]
  \label{prop-simple-prop}
  \begin{enumerate}
  \item If $F: X \to \Real$ is simple, then $\CLOS{F}:\CLOS{X} \to \Real$ is simple.
  \item If $F, F' : \hS \to \Real$ are regionally simple functions, then 
    $\min(F, F')$  and $\max(F, F')$ are also regionally
    simple~\footnote{
      For functions $F, F' : \hS \to \Real$ we define functions 
      $\max(F, F'), \min(F, F') : \hS \to \Real$ by
      $\max(F, F')(s) = \max \eset{F(s), F'(s)}$ and 
      $\min(F, F')(s) = \min \eset{F(s), F'(s)}$, for every $s \in \hS$.}. 
  \item
    If $F$ be regionally simple, then, for every region $R = (\ell, \region)$ and $\alpha \in \hA(R)$,
    the function  
    $t(((\ell, \cdot), R), \alpha) + F(\SUCC(((\ell, \cdot), R),
    \alpha))$ is simple.    
  \item Any decreasing sequence of regionally simple functions is
    finite. 
  \end{enumerate}
\end{proposition}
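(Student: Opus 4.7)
The plan is to verify the four properties in turn, leveraging the two canonical forms of a simple function: either $F(\nu) = e$ or $F(\nu) = e - \nu(c)$ with $e \in \Int$ and $c \in \clocks$. Properties (1) and (3) are essentially bookkeeping. For (1), both canonical forms are already continuous expressions on all of $V$, so the unique continuous extension $\CLOS{F}$ coincides with the same formula on $\CLOS{X}$ and is again simple. For (3), I would expand $\SUCC$ for $\alpha = ((b,c,a), (\ell', \zeta_a))$ with the unique reset $(C, \ell')$, using that $\delta(\ell,a)$ is a point distribution in the non-probabilistic case: after the transition, the value of any clock $c'$ is either $0$ (if $c' \in C$), the integer $b$ (if $c' = c$ and $c \notin C$), or $\nu(c') + b - \nu(c)$ otherwise. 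Substituting each case into $t(s,\alpha) + F(\SUCC(s,\alpha))$ with $t(s,\alpha) = b - \nu(c)$ yields, by direct algebraic simplification, an expression of the form constant or $e'' - \nu(c'')$ with $e'' \in \Int$ and $c'' \in \clocks$.

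Property (2) is where the essential region-theoretic content lies. Focusing on a single region $R = (\ell, \region)$, I need to show that the pointwise inequality $F_1 \leq F_2$ between two simple functions $F_1, F_2$ on $R$ either holds throughout $R$ or fails throughout $R$; in that case $\min(F_1, F_2)$ and $\max(F_1, F_2)$ agree with one of $F_1, F_2$ on all of $R$ and are therefore simple there. The comparison reduces to testing whether an integer constant $e_1 - e_2$ is at most a clock expression that, depending on the two canonical forms, is $0$, $\nu(c)$, or $\nu(c_1) - \nu(c_2)$. By the characterisation of clock regions via the indistinguishability-by-clock-constraints relation, every such expression either takes a single integer value throughout $R$ or varies strictly inside some open interval $(k, k{+}1)$ with $k \in \Int$. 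Since $e_1 - e_2$ is itself an integer, it cannot strictly separate values that lie inside such an open interval, so the direction of the inequality is uniform on $R$.

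For (4), I would observe that the integer constants appearing in any regionally simple function produced by iterating the improvement operator from the initial $p_0$ are bounded: the $k$-boundedness of clocks together with Lemma~\ref{lemma:reachable-subgraph-is-finite-MDP} yield a uniform bound on accumulated rewards along paths in the reachable subgraph, and the clock distinguishing a simple form on any given region ranges over the finite set $\clocks$. Hence there are only finitely many candidate regionally simple functions on the reachable subgraph, and any strictly decreasing sequence among them must terminate.

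The main obstacle is (2): the uniform comparability on a region rests on the dichotomy that each clock-difference expression is either a constant integer throughout $R$ or stays strictly inside an interval $(k, k{+}1)$, and extracting this from the definition of regions in terms of integer parts of clocks and the partial order of their fractional parts is the delicate step. Once (2) is secured, the remaining three properties reduce to direct manipulation of the canonical forms.
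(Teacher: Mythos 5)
The paper offers no proof of this proposition --- it simply cites Asarin and Maler~\cite{AM99} --- but your reconstruction is correct and is the standard argument: the crux is indeed your dichotomy for (2), namely that on a fixed clock region each expression $\nu(c)$ or $\nu(c_1)-\nu(c_2)$ is either a constant integer or stays strictly inside an open unit interval, so its comparison with the integer $e_1-e_2$ is uniform over the region and $\min/\max$ coincides with one of the two simple functions there; (1) and (3) are the direct substitutions you describe. For (4) you could argue slightly more simply: a decreasing sequence of regionally simple functions into $\Rplus$ is sandwiched between $0$ and its first element, there are only finitely many regions and finitely many simple functions per region with integer constants in a bounded range, so the sequence ranges over a finite set and must stabilise --- the appeal to the improvement operator and accumulated rewards is not needed for the statement as given.
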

Using the first three closure properties of simple functions, it is
easy to see that the function $\ImproveA$ in (\ref{e:imp}) is such that,
if $f$ is regionally simple, then so is $\ImproveA(f)$.
Since the initial function $p_0$ is regionally simple, it is immediate
that, if Algorithm~\ref{algorithm:2-player-vi-nonp} terminates, it will
return a regionally simple solution of $\Opt(\hTt)$. 
Now, since the function $\ImproveA$ is monotonic, 
the sequence $\seq{p_0, p_1, p_2, \ldots}$ of intermediate
value functions in Algorithm~\ref{algorithm:2-player-vi-nonp} is a
decreasing sequence of regionally simple functions.
Proposition~\ref{prop-simple-prop}(4) then guarantees the termination
of the value iteration algorithm. 
Jurdzi{\'n}ski and Trivedi~\cite{JT07} show that if a solution of 
$\Opt(\hTt)$ is regionally simple then it gives a solution of
optimality equations for the original timed automaton. 
\vspace*{-6pt}
\subsubsection*{The probabilistic case.}
In this section we consider extending the above approach to solve
$\Opt(\hTt)$ when $\Tt$ is a probabilistic timed automaton. 
Based on the optimality equations, we define the
value improvement function $\ImproveB:[\hS \to \Rplus] \to [\hS \to
\Rplus]$ such that for any $f : \hS \to \Rplus$ and $s \in \hS$: 
\begin{equation}
  \label{e:imp2}
 \ImproveB(f)(s) {=} \left\{
    \begin{array}{cl}
      0 & \text{if $s \in \hF$}\\
      \min_{\alpha \in \hA(s)} \left\{  t(s, \alpha) +  \sum_{s' \in S} p(s'| s, \alpha)  {\cdot} f(s') \right\}
      & \text{if $s \in \hS_\mMIN {\setminus} \hF$}\\
      \max_{\alpha \in \hA(s)} \left\{  t(s, \alpha) +  \sum_{s' \in S} p(s'| s, \alpha)  {\cdot} f(s') \right\} 
      & \text{if $s \in \hS_\mMAX {\setminus} \hF$.}
    \end{array} \right.
\end{equation}
It is straightforward to verify that a fixpoint of the function
$\ImproveB$ is the solution of $\Opt(\hTt)$.
By Assumption~\ref{assum:proper-strategy} and
Lemma~\ref{lemma:reachable-subgraph-is-finite-MDP}, it is immediate
that $\ImproveB$ is a contraction, and therefore
$\ImproveB$ can be 
used in a straightforward value iteration algorithm to approximate
$\Opt(\hTt)$. 
However, trying to extend the approach of Asarin and Maler~\cite{AM99}
fails since the intermediate functions in the value iteration
algorithms no longer remain regionally simple.
To overcome this problem, we present a generalisation of
simple functions, which we call \emph{quasi-simple} functions.

Before introducing quasi-simple functions, we require the partial order
$\unlhd \subseteq V {\times} V$, where for any valuations $\nu$ and $\nu'$
we have  $\nu \unlhd \nu'$ if and only if there exists a 
$t \in \Rplus$ such that for each clock $c \in C$ either 
$\nu'(c) {-} \nu(c) = t$ or $\nu(c) {=} \nu'(c)$, and  $\nu'(c){-}\nu(c) =t$
for at least one clock $c \in C$. 
In this case we also write $(\nu' {-} \nu) = t$.

\begin{definition}[Quasi-Simple Functions]
  Let $X \subseteq V$ be a subset of valuations.
  A function $F: X \to \Real$ is \emph{quasi-simple} if:
  \begin{itemize}
  \item \emph{(Lipschitz Continuous)} there exists $K {\geq} 0 $ such that
    $ |F(\nu) {-} F(\nu')| \leq K \cdot {\InfNorm{\nu {-} \nu'}}$ for all $\nu, \nu' \in X$;
  \item \emph{(Monotonically decreasing and nonexpansive w.r.t. $\unlhd$)}
    $\nu \unlhd \nu'$ implies $F(\nu) {\geq} F(\nu')$ and $F(\nu) {-} F(\nu') \leq \nu' {-} \nu$ for all $\nu, \nu' \in X$.  
  \end{itemize}
\end{definition}

\begin{proposition}[Quasi-simple functions generalise simple functions]
  Every simple function is also quasi-simple.
\end{proposition}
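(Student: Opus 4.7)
The plan is a direct case analysis on the definition of a simple function. A simple function on $X \subseteq V$ is either (i) a constant $F(\nu) = e$ for some $e \in \Int$, or (ii) of the form $F(\nu) = e - \nu(c)$ for some $e \in \Int$ and $c \in \clocks$. In each case I would verify the three defining properties of a quasi-simple function (Lipschitz continuity, monotonic decrease w.r.t.~$\unlhd$, and non-expansiveness w.r.t.~$\unlhd$) separately.

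The constant case is immediate: $F(\nu) {-} F(\nu') = 0$ for all $\nu,\nu' \in X$, so Lipschitz continuity holds with constant $K = 0$, monotonicity gives $F(\nu) = e \geq e = F(\nu')$, and non-expansiveness gives $F(\nu) {-} F(\nu') = 0 \leq t = \nu' {-} \nu$ whenever $\nu \unlhd \nu'$ with $(\nu' {-} \nu) = t \geq 0$.

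For the case $F(\nu) = e - \nu(c)$, Lipschitz continuity follows with $K = 1$ from the bound $|F(\nu){-}F(\nu')| = |\nu(c) {-} \nu'(c)| \leq \InfNorm{\nu {-} \nu'}$. The substantive observation is about the relation $\unlhd$: if $\nu \unlhd \nu'$ with $(\nu' {-} \nu) = t$, then by definition, for the particular clock $c$, either $\nu'(c) {-} \nu(c) = t$ or $\nu'(c) = \nu(c)$; in either case $0 \leq \nu'(c){-}\nu(c) \leq t$. This gives both $F(\nu) {-} F(\nu') = \nu'(c){-}\nu(c) \geq 0$ (monotonic decrease) and $F(\nu){-}F(\nu') = \nu'(c){-}\nu(c) \leq t = \nu' {-} \nu$ (non-expansiveness).

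There is no real obstacle: the only step that needs slight care is unpacking the definition of $\unlhd$ to see that the single coordinate $c$ used by the simple function either moves by exactly $t$ or stays fixed, which is precisely what makes a simple function both monotone and non-expansive. Consequently, every simple function is quasi-simple.
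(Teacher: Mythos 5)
Your proof is correct and follows essentially the same route as the paper's: a case split on the two forms of a simple function, with direct verification of Lipschitz continuity, monotonicity, and non-expansiveness. If anything, your treatment of non-expansiveness (unpacking $\unlhd$ at the specific clock $c$ to get $0 \leq \nu'(c) - \nu(c) \leq t$) is more explicit than the paper's, which derives it from the Lipschitz bound and contains a small typo ($\nu - \nu'$ where $\nu' - \nu$ is meant).
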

\begin{proof}
  Let $X \subseteq V$ be a subset of valuations and $F: X \to \Real$ a
  simple function. 
  If $F$ is constant then the proposition trivially
  follows. 
  Otherwise, there exists $b \in \Int$ and $c \in C$ such that
  $F(\nu) = b {-} \nu(c)$ for all $\nu \in X$.
  We need to show that $F$ is Lipschitz continuous, and 
  monotonically decreasing and nonexpansive w.r.t $\unlhd$. 
  
  \begin{enumerate}
  \item 
    To prove that $F$ is Lipschitz continuous, notice that 
    $|F(\nu) - F(\nu')| = |b - \nu(c) - b + \nu'(c)| = |\nu'(c) -
    \nu(c)| \leq \InfNorm{\nu -\nu'}$.
  \item 
    For $\nu, \nu' \in X$ such that $\nu \unlhd \nu'$, we
    have $F(\nu) = b {-} \nu(c) \geq b {-} \nu'(c) = F(\nu')$.
    From the first part of this proof, it trivially follows that
    $F(\nu) {-} F(\nu') \leq \nu {-} \nu'$.\qed
  \end{enumerate}
\end{proof}
We say a function $F: \hS \to \Rplus$ is regionally quasi-simple if for every
region $(\ell, \region) \in \Rr$ the function 
$F((\ell, \cdot), (\ell, \region))$ is quasi-simple.

\begin{lemma}[Properties of Quasi-Simple Functions]
  \label{lemma:prop-quasi-simple}
  \begin{enumerate}
  \item 
    If $F: X \to \Real$ is quasi-simple, then $\CLOS{F} : \CLOS{X} \to
    \Real$ is quasi-simple.  
  \item
    If $F, F' : \hS \to \Real$ are regionally quasi-simple functions,
    then $\max(F, F')$  and $\min(F, F')$ are also regionally
    quasi-simple. 
  \item  
    If $F$ is regionally quasi-simple, then, for any $R = (\ell, \region)$ and $\alpha \in \hA(R)$,
    the function  
    $t(((\ell, \cdot), R), \alpha) + \sum_{s' \in S} p(s' | ((\ell,
    \cdot), R), \alpha) \cdot F(s')$ is quasi-simple.    
  \item 
    The limit of a sequence of quasi-simple functions is quasi-simple.  
  \end{enumerate}
\end{lemma}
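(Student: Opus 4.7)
The plan is to verify the four parts in turn, treating parts 1, 2, and 4 as largely routine and concentrating effort on part 3.

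For part 1, quasi-simple functions are Lipschitz hence uniformly continuous, so the extension $\CLOS{F}$ exists uniquely on $\CLOS{X}$ and inherits the same Lipschitz constant $K$ by the standard limit argument. For the $\unlhd$-properties, any $\nu \unlhd \nu'$ in $\CLOS{X}$ with common increment $t$ can be approximated by sequences $\nu_n \unlhd \nu'_n$ in $X$ with the same increment $t$, and both inequalities transfer in the limit by continuity. Part 2 is immediate: pointwise $\max$ and $\min$ of two $K$-Lipschitz functions is $K$-Lipschitz (taking the larger $K$), and both $\unlhd$-monotonicity and $\unlhd$-nonexpansivity are preserved under pointwise $\max$ and $\min$. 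Part 4 follows by a similar limit argument, assuming a uniform Lipschitz bound on the sequence---all three defining inequalities pass to the pointwise limit.

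Part 3 is the substantive case. For a boundary action $\alpha = ((b, c, a), (\ell, \zeta_a))$, rewrite the function as
\[
g(\nu) = (b - \nu(c)) + \sum_{(C, \ell')} \delta[\ell,a](C, \ell') \cdot F_{R'_C}\bigl(\phi_C(\nu)\bigr),
\]
where $\phi_C(\nu) = (\nu + (b - \nu(c)))[C{:=}0]$, $R'_C = (\ell', \zeta_a[C{:=}0])$, and $F_{R'_C}$ denotes the quasi-simple restriction of $F$ to region $R'_C$. Each $\phi_C$ is affine and $2$-Lipschitz in the infinity norm (the coordinate $c'\notin C, c'\neq c$ differs by $(\nu_1(c')-\nu_2(c'))-(\nu_1(c)-\nu_2(c))$), so $F_{R'_C} \circ \phi_C$ is Lipschitz; adding the $1$-Lipschitz time term and taking the convex combination preserves Lipschitz continuity. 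For the $\unlhd$-properties, analyse $\nu_1 \unlhd \nu_2 \in \CLOS{\region}$ with common increment $t$ by cases on the behaviour of clock $c$. If $c$ increased, so $\nu_2(c) = \nu_1(c) + t$, a direct coordinate-wise calculation shows $\phi_C(\nu_2) \unlhd \phi_C(\nu_1)$ with the same increment $t$---the relation \emph{reverses} because the time shift itself depends on $\nu(c)$. The time term then contributes exactly $t$ to $g(\nu_1) - g(\nu_2)$, while each $F$-summand lies in $[-t, 0]$ by the quasi-simplicity bound on $F_{R'_C}$, giving $g(\nu_1) - g(\nu_2) \in [0, t]$. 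If $c$ stayed constant, $\phi_C(\nu_1) \unlhd \phi_C(\nu_2)$ with increment $t$ is \emph{preserved}, the time term contributes $0$, and each $F$-summand contributes in $[0, t]$, yielding the same bound.

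The main obstacle is the direction-reversal behaviour of $\phi_C$ in part 3: quasi-simplicity is exactly the hypothesis needed, since it bounds $F$ on both sides of the $\unlhd$ relation and so absorbs the sign change introduced by the time shift. This is precisely the reason simple functions had to be generalised to quasi-simple functions in the probabilistic setting, where the post-action value function is no longer simple but must still be analytically controllable.
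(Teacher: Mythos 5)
Your proof is correct and follows essentially the same route as the paper's appendix: the same extension-by-Lipschitz-continuity argument for part 1, the same pointwise case analysis for $\max$ and $\min$ in part 2, and in part 3 the identical decomposition into the time term $b-\nu(c)$ plus a convex combination of $F$ composed with the shift-and-reset maps, with the same case split on whether clock $c$ advances (reversing the $\unlhd$ relation) or stays fixed (preserving it), so that monotonicity and nonexpansiveness of $F$ together absorb the sign change. If anything you are slightly more careful than the paper on two minor points: the shift map is only $2$-Lipschitz in the infinity norm (so the composite constant is $1+2K$ rather than the $1+K$ the paper writes down), and part 4 does require a uniform Lipschitz bound across the sequence, which the paper leaves implicit.
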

From the first three properties, it follows that the 
intermediate functions of $\ImproveB$ in (\ref{e:imp2})
are regionally quasi-simple.
In addition, the fourth property implies that its fixpoint is also
regionally quasi-simple, and hence the solution of $\Opt(\hTt)$ is 
regionally quasi-simple.

\begin{proposition}
  \label{prop:quasi-simple-brg}
  Let $\Tt$ be a probabilistic timed game. 
  If $P \models \Opt(\hTt)$, then $P$ is regionally quasi-simple. 
\end{proposition}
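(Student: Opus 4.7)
The plan is to realise $P$ as the pointwise limit of regionally quasi-simple iterates of $\ImproveB$, and then transfer regional quasi-simplicity from the iterates to the limit via Lemma~\ref{lemma:prop-quasi-simple}(4).

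First, I would initialise $p_0 \equiv 0$ on $\hS$, which is trivially regionally quasi-simple (the constant zero function is Lipschitz with constant $0$ and vacuously monotonically decreasing/nonexpansive w.r.t.\ $\unlhd$), and put $p_{i+1} = \ImproveB(p_i)$. By induction on $i$, each $p_i$ is regionally quasi-simple. In the induction step, fix a non-final region $R = (\ell,\region)$ and assume $p_i$ is regionally quasi-simple. For each fixed boundary action $\alpha \in \hA(R)$, Lemma~\ref{lemma:prop-quasi-simple}(3) tells us that the function
$$\nu \mapsto t(((\ell,\nu),R),\alpha) + \sum_{s' \in \hS} \hp(s' \mid ((\ell,\nu),R), \alpha)\cdot p_i(s')$$
is quasi-simple on $\region$. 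Since $\hA(R)$ is a subset of the finite set $\hA$, iterated application of Lemma~\ref{lemma:prop-quasi-simple}(2) shows that the pointwise $\min$ or $\max$ (depending on whether $\ell \in L_\mMIN$ or $\ell \in L_\mMAX$) of these finitely many quasi-simple functions remains quasi-simple on $\region$. Hence $p_{i+1}$ is regionally quasi-simple; the base cases at final regions are immediate as $\ImproveB(f)(s) = 0$ for $s \in \hF$.

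Next, I would appeal to the contraction property of $\ImproveB$ noted in the paragraph immediately preceding the proposition, which follows from Assumption~\ref{assum:proper-strategy} together with Lemma~\ref{lemma:reachable-subgraph-is-finite-MDP}. Banach's fixed-point theorem, applied on the sup-norm complete metric space of bounded functions on the finite reachable subgraph from any given state, yields that $(p_i)$ converges pointwise to the unique bounded fixpoint of $\ImproveB$; since $P$ is itself a fixpoint of $\ImproveB$, this forces $p_i \to P$ pointwise. Finally, for each region $R = (\ell,\region)$, the function $P((\ell,\cdot),R)$ is the pointwise limit of the quasi-simple functions $p_i((\ell,\cdot),R)$, so Lemma~\ref{lemma:prop-quasi-simple}(4) gives that it is itself quasi-simple; hence $P$ is regionally quasi-simple.

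The main obstacle is the induction step, and specifically the invocation of Lemma~\ref{lemma:prop-quasi-simple}(3): in the probabilistic setting the successors of $((\ell,\nu),R)$ under a single boundary action $\alpha$ are distributed across several successor regions, so the expectation $\sum_{s'}\hp(s' \mid \cdot,\alpha)\cdot p_i(s')$ is a genuine convex combination of quasi-simple functions living on different regions. This is precisely where the Lipschitz and $\unlhd$-nonexpansiveness clauses built into the definition of quasi-simple do real work (simple functions in the Asarin--Maler sense are not closed under such probabilistic mixtures, which is why the generalisation is needed); a secondary subtlety is that $\hS$ is globally infinite, so the Banach argument has to be applied per reachable subgraph, but regional quasi-simplicity is a local property verifiable region-by-region so this causes no difficulty.
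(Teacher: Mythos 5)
Your proposal is correct and follows essentially the same route as the paper, which argues (in the two sentences preceding the proposition) that the iterates of $\ImproveB$ are regionally quasi-simple by the first three closure properties of Lemma~\ref{lemma:prop-quasi-simple} and that the fixpoint inherits this by the fourth. Your write-up merely makes explicit the details the paper leaves implicit --- the choice of the quasi-simple initial function $p_0 \equiv 0$, the induction over iterates, and the use of the contraction property on the finite reachable subgraph to identify $P$ with the limit of the iterates.
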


\section{Correctness of the Reduction}
\label{sec:correctness}
We now demonstrate the correctness of our results, showing
that the problem of expected reachability-time games on PTAs
can be reduced to expected reachability-price games over the boundary
region graph. 
For a given function $f : \hS \to \Real$, we define 
$\widetilde{f} : S \to \Real$ by 
$\widetilde{f}(\ell,\nu) = f((\ell,\nu),(\ell,[\nu]))$.
Formally we have the following result.
\begin{theorem}\label{main-thm}
  Let $\Tt$ be a probabilistic timed game.  If $P \models \Opt(\hTt)$,
  then $\widetilde{P} \models \Opt(\Tt)$. 
\end{theorem}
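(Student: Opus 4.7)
The plan is to verify pointwise that $\widetilde{P}$ satisfies $\Opt(\Tt)$ at every state $s = (\ell, \nu)$. Writing $\hat{s} = ((\ell,\nu),(\ell,[\nu]))$ for the corresponding state in $\hTt$, the terminal case ($s \in F$) is immediate because $\hat{s} \in \hF$ forces $\widetilde{P}(s) = P(\hat{s}) = 0$. For the non-terminal case, the task reduces to showing that the infimum (for Min) or supremum (for Max) of $g_a(t) := t + \sum_{(C, \ell')} \delta[\ell,a](C, \ell') \cdot \widetilde{P}(\ell', (\nu+t)[C{:=}0])$ over all timed actions $(t, a) \in A(s)$ coincides with the minimum (resp.\ maximum) over the finite set $\hA(\hat{s})$ of boundary actions appearing in the optimality equation for $P(\hat{s})$.

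\textbf{Use of quasi-simplicity.} By Proposition~\ref{prop:quasi-simple-brg} the solution $P$ is regionally quasi-simple, and hence so is $\widetilde{P}$. The crucial observation is that, on any subinterval of $t$ on which none of the finitely many reset valuations $(\nu+t)[C{:=}0]$ (for $C \subseteq \clocks$ in the support of $\delta[\ell,a]$) crosses a region boundary, the map $g_a$ is non-decreasing in $t$. Indeed, for $t_1 < t_2$ in such a subinterval and any $C \neq \clocks$, one has $(\nu+t_1)[C{:=}0] \unlhd (\nu+t_2)[C{:=}0]$ with $\unlhd$-difference $t_2 - t_1$, so the monotonicity and nonexpansiveness of the quasi-simple $\widetilde{P}(\ell', \cdot)$ ensure that each $\widetilde{P}$-contribution changes by an amount in $[-(t_2 - t_1), 0]$, dominated by the explicit $+t$ term. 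Lipschitz continuity of $\widetilde{P}$ additionally lets $g_a$ be extended continuously to the closure of each such subinterval.

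\textbf{Two directions.} For the ``$\geq$'' direction in Min states (symmetric for Max): given a PTA action $(t, a)$, let $(\ell, \zeta_a)$ be the region containing $\nu + t$. If $\zeta_a$ is thin, the boundary action $((b, c, a), (\ell, \zeta_a))$ with $t = b - \nu(c)$ has an identical successor distribution, so the two values coincide exactly. If $\zeta_a$ is thick, the monotonicity of $g_a$ implies that $g_a(t)$ is bounded below by $\lim_{t' \downarrow b - \nu(c)} g_a(t')$, where $(b, c)$ witnesses $(\ell, [\nu]) \to_{b, c} (\ell,\region_-) \to_{+1} (\ell,\zeta_a)$; by Lipschitz continuity this limit equals the value of the corresponding ``left'' boundary action listed in Definition~\ref{brg-def}. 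For the ``$\leq$'' direction, each boundary action is either matched exactly by a PTA action (thin target case) or is the limit of a sequence of PTA actions of the form $(b - \nu(c) \pm \epsilon, a)$ (thick target cases), once again by continuity.

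\textbf{Main obstacle.} The chief technical hurdle is the thick-target case, where an entire open interval of PTA actions shares the same target region but the boundary region graph retains only its two extremal actions. Making this rigorous requires both quasi-simplicity properties of $\widetilde{P}$ simultaneously---Lipschitz continuity, to compute the limits at the interval endpoints, and monotonicity along $\unlhd$, to preclude a strictly smaller (for Min) or larger (for Max) value at an interior $t$. A secondary nuisance is that distinct reset sets $C$ appearing in $\delta[\ell, a]$ may send $(\nu+t)[C{:=}0]$ through different regions at different values of $t$; this is addressed by partitioning the range of $t$ into finitely many subintervals on which all the induced reset valuations remain in fixed regions, so that the monotonicity argument above applies verbatim on each piece.
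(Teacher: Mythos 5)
Your proposal is correct and follows essentially the same route as the paper: decompose the infimum/supremum over timed actions by target region, handle thin regions by exact correspondence with boundary actions, and for thick regions use the quasi-simplicity of $P$ (Proposition~\ref{prop:quasi-simple-brg}) to show the map $t \mapsto t + \sum \delta[\ell,a](C,\ell')\cdot\widetilde{P}(\ell',(\nu+t)[C{:=}0])$ is continuous and nondecreasing on each region interval (the paper's Lemma~\ref{lemma:nondecreasing-quasi-simple}), so that the extremum over the open interval is realised in the limit at its endpoints, i.e.\ at the boundary actions of Definition~\ref{brg-def}. Your ``secondary nuisance'' about distinct reset sets is already resolved by the partition into target regions, since fixing the region of $\nu+t$ fixes the region of every $(\nu+t)[C{:=}0]$; otherwise the argument matches the paper's.
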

Before giving the proof we require the following property of
quasi-simple functions.  
\begin{lemma}
  \label{lemma:nondecreasing-quasi-simple}
  Let $s = (\ell, \nu) \in S$ and $(\ell, \region) \in \Rr$ such that
  $(\ell,[\nu]) \rightarrow_* (\ell, \region)$.
  If $F : \hS \to \Real$ is regionally quasi-simple, then
  the function 
  $F^\oplus_{s, \region, a}: I \to \Real$ defined as 
  \[
  F^\oplus_{s, \region, a}(t)
  = t  + \mbox{$\sum_{(C \!,\ell') \in \powC \times L}$}
  \delta[\ell,a](C,\ell') {\cdot} F((\ell',\nu^t_C), (\ell', \region^C))
  \] 
  is continuous and nondecreasing, where $I = \{t \in \Rplus \, | \,  \nu {+} t \in \region\}$,
  $\nu^t_C = \nu{+}t[C{:=}0]$ and $\region^C = \region[C{:=}0]$.
\end{lemma}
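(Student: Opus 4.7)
\medskip

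\noindent\textbf{Proof plan.} The plan is to establish continuity from the Lipschitz part of the quasi-simple definition, and monotonicity from the nonexpansiveness part combined with the fact that the total probability mass over resets sums to one, so the loss in each $F$-term is dominated by the gain of the explicit $+t$ summand.

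First I would verify that, for every $t \in I$ and every $C \subseteq \clocks$ with $\delta[\ell,a](C,\ell') > 0$, the valuation $\nu^t_C = (\nu{+}t)[C{:=}0]$ lies in $\CLOS{\region^C}$, so that $F((\ell',\nu^t_C),(\ell',\region^C))$ is well-defined; this is immediate from $\nu{+}t \in \CLOS{\region}$ and the definition of $\region^C = \region[C{:=}0]$. Since $F$ restricted to each region is quasi-simple, in particular it is Lipschitz continuous, and the map $t \mapsto \nu^t_C$ is itself $1$-Lipschitz (in the $\InfNorm{\cdot}$ norm). Composing and summing finitely many Lipschitz-continuous functions, and adding the continuous summand $t$, yields continuity of $F^\oplus_{s,\region,a}$ on $I$.

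For monotonicity, take $t_1, t_2 \in I$ with $t_1 < t_2$ and any $C \subseteq \clocks$. I would check directly from the definition of $\unlhd$ that $\nu^{t_1}_C \unlhd \nu^{t_2}_C$ with $(\nu^{t_2}_C - \nu^{t_1}_C) = t_2 - t_1$ whenever $C \ne \clocks$ (clocks in $C$ agree at $0$, while clocks outside $C$ differ by exactly $t_2 - t_1$), and $\nu^{t_1}_C = \nu^{t_2}_C$ when $C = \clocks$. In both cases the nonexpansiveness clause of the quasi-simple property gives
\[
F((\ell',\nu^{t_1}_C),(\ell',\region^C)) - F((\ell',\nu^{t_2}_C),(\ell',\region^C)) \;\le\; t_2 - t_1.
\]
Multiplying by $\delta[\ell,a](C,\ell') \ge 0$, summing over $(C,\ell') \in 2^{\clocks} \times L$, and using $\sum_{(C,\ell')} \delta[\ell,a](C,\ell') = 1$, the total drop in the sum-term is at most $t_2 - t_1$, which is exactly compensated by the $+t$ summand. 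Hence $F^\oplus_{s,\region,a}(t_2) - F^\oplus_{s,\region,a}(t_1) \ge 0$.

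The only mildly delicate point is bookkeeping around the $\unlhd$ relation: one must remember that $\unlhd$ requires strict equality $\nu'(c)-\nu(c)=t$ on clocks that change and exact equality on those that do not, so the case split on whether $C = \clocks$ (all clocks pinned to $0$, giving $\nu^{t_1}_C = \nu^{t_2}_C$ and thus trivially equal $F$-values) should be handled separately from $C \subsetneq \clocks$ (where a genuine $\unlhd$-step of length $t_2-t_1$ occurs). Everything else is a routine application of the definitions and of the quasi-simple properties established in Lemma~\ref{lemma:prop-quasi-simple}.
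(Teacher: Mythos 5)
Your proposal is correct and follows essentially the same argument as the paper: the drop in each $F$-term under the $\unlhd$-step from $\nu^{t_1}_C$ to $\nu^{t_2}_C$ is bounded by $t_2-t_1$ via nonexpansiveness, and summing against the probabilities (which total one) shows the explicit $+t$ summand compensates. Your explicit treatment of continuity and the separate case $C=\clocks$ (where $\nu^{t_1}_C=\nu^{t_2}_C$ and the $\unlhd$ relation would not literally apply) are minor refinements of details the paper leaves implicit.
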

\begin{proof}[of Theorem~\ref{main-thm}]
  Suppose that $P \models \Opt(\hTt)$, to prove this theorem it is
  sufficient to show that for any $s {=}(\ell, \nu) \in S_\mMIN$ we have:
  \begin{equation}
    \label{eqn1} \begin{array}{c}
    \tP(s) = \inf_{(t{,} a) \in A(s)} 
    \left\{ t + \mbox{$\sum_{(C \!{,}\ell') \in \powC \times L}$}
    \delta[\ell{,}a](C{,} \ell') {\cdot} \tP(\ell'{,}(\nu{+}t)[C{:=}0]) \right\}
  \end{array} \end{equation}
and for any $s {=}(\ell, \nu) \in S_\mMAX$ we have:
\begin{equation}  \begin{array}{c}
    \tP(s) = \sup_{(t{,} a) \in A(s)} 
    \left\{ t + \mbox{$\sum_{(C \!{,}\ell') \in \powC \times L}$}
      \delta[\ell{,}a](C{,} \ell') {\cdot} \tP(\ell'{,}(\nu{+}t)[C{:=}0]) \right\} \, .
  \end{array} \end{equation}
In the remainder of the proof we restrict attention to Min states as the case
for Max states follows similarly. Therefore we fix $s {=} (\ell, \nu)
\in S_\mMIN$ for the remainder of the proof. 
For $a \in \act$, let $\Rr_\THIN^{a}$ and
$\Rr_\THICK^{a}$ denote the set of 
thin and thick regions respectively that are successors of
$[\nu]$ and are subsets of $E(\ell,a)$. 
Considering the right hand side (RHS) of (\ref{eqn1}) we have:
\begin{eqnarray}
  \text{ RHS of (\ref{eqn1})} 
  &=& \min_{a \in \act} \{T_{\THIN}(s, a), T_{\THICK}(s,
  a)\}\label{eqn2}, 
\end{eqnarray}
where $T_{\THIN}(s, a)$ ($T_{\THICK}(s, a)$) is the infimum (supremum) of the
RHS of (\ref{eqn1}) over all actions $(t, a)$ such that 
  $[\nu{+}t] \in \Rr_\THIN^{a}$ ($[\nu{+}t] \in \Rr_\THICK^{a}$). 
  For the first term we have:
  \begin{eqnarray*}
    T_{\THIN}(s{,} a) & = & \min_{(\ell{,} \region) \in \Rr_\THIN^{a}} 
      \inf_{\substack{ t \in \Real \wedge \\ \nu+t \in \region}}
      \left\{ t  + 
        \mbox{$\sum\limits_{(C \!{,}\ell') \in \powC \times L}$} 
        \hspace{-0.3cm}\delta[\ell{,}a](C{,} \ell') {\cdot}
        \tP(\ell'{,}\nu^t_C) \right\} \\
    & = & \min_{(\ell{,} \region) \in \Rr_\THIN^{a}} 
    \inf_{\substack{ t \in \Real \wedge \\ \nu+t \in \region}}
    \left\{ t  + 
      \mbox{$\sum\limits_{(C \!{,}\ell') \in \powC \times L}$} 
      \hspace{-0.3cm}\delta[\ell{,}a](C{,} \ell') {\cdot}
      P((\ell'{,}\nu^t_C){,} (\ell'{,} \region_C)) \right\}\\
    & = &     \min_{(\ell{,} \region) \in \Rr_\THIN^{a}}  
    \left\{ t^{(\ell{,}\region)} + 
      \mbox{$\sum\limits_{(C \!{,}\ell') \in \powC \times L}$} 
      \hspace{-0.3cm}\delta[\ell{,}a](C{,} \ell') {\cdot}
      P((\ell'{,}\nu^{t^{(\ell{,}\region)}}_C){,} (\ell'{,} \region^C)) \right\}
  \end{eqnarray*}
  where $\nu^t_C$ denotes the clock valuation $(\nu{+}t)[C{:=}0]$,
  $t^{(\ell,\region)}$ the time to reach the region $R$ from $s$ and
  $\region^C$ the region $\region[C{:=}0]$. 
  Considering the second term of (\ref{eqn2}) we have 
  \begin{eqnarray*}
    T_{\THICK}(s{,} a) &=& \min_{(\ell{,} \region) \in \Rr_\THICK^{a}}
      \inf_{\substack{ t \in \Real \wedge \\ \nu+t \in \region}} 
      \left\{ t + 
        \mbox{$\sum\limits_{(C \!{,}\ell') \in \powC \times L}$} 
        \hspace{-0.3cm}\delta[\ell{,}a](C{,}\ell') {\cdot} 
        \tP(\ell'{,}\nu^t_C) \right\} \\
    & = & \min_{(\ell{,} \region) \in \Rr_\THICK^{a}}
    \inf_{\substack{ t \in \Real \wedge \\ \nu+t \in \region}} 
    \left\{ t + 
      \mbox{$\sum\limits_{(C \!{,}\ell') \in \powC \times L}$}
      \hspace{-0.3cm}\delta[\ell{,}a](C{,}\ell') {\cdot} 
      P((\ell'{,}\nu^t_C){,} (\ell'{,} \region^C)) \right\} \\
    & = &
    \min_{(\ell{,} \region) \in \Rr_\THICK^{a}}
    \inf_{ \substack{t^{s}_{R_{-}} {<} t {<} t^{s}_{R_{+}} \\  
        R {\leftarrow_{+1}} R_{-} \\  R {\rightarrow_{+1}} R_{+}}} 
    \left\{ t
      + \mbox{$\sum\limits_{(C \!{,}\ell') \in \powC \times L}$} 
      \hspace{-0.3cm}\delta[\ell{,}a](C{,}\ell') {\cdot}
      P((\ell'{,}\nu^t_C){,} (\ell'{,} \region^C)) \right\}
  \end{eqnarray*}
  From Proposition~\ref{prop:quasi-simple-brg} it follows that
  $P$ is regionally quasi-simple and, 
  from Lemma~\ref{lemma:nondecreasing-quasi-simple}, the function: 
  \[
  t  
  + \mbox{$\sum_{(C \!,\ell') \in \powC \times L}$} \delta[\ell,a](C,\ell') {\cdot}
  P((\ell',\nu^t_C), (\ell', \region^C))
  \]
  is continuous and nondecreasing over $\{t \, | \, \nu{+}t \in \region\}$.
  Therefore it follows that
  \begin{eqnarray*}
   T_{\THICK}(s, a) & = & \min_{(\ell{,} \region) \in \Rr_\THICK^{a}}
    \min_{\substack{t =t^{s}_{R_{-}} {,} t^{s}_{R_{+}} \\ 
        (\ell{,}\region) \leftarrow_{+1} R_{-} \\  
        (\ell{,}\region) \rightarrow_{+1} R_{+}}} 
    \left\{ t 
      + \hspace{-0.5cm} \mbox{$\sum\limits_{(C \!{,}\ell') \in \powC \times L}$} 
      \hspace{-0.3cm}\delta[\ell{,}a](C{,}\ell') {\cdot}
      P((\ell'{,}\nu^t_C){,} (\ell'{,} \region^C)) \right\}
  \end{eqnarray*}
  Substituting the values of $T_{\THIN}(s, a)$ and $T_{\THICK}(s, a)$
  into (\ref{eqn2}) and observing that, for any thin region 
  $(\ell,\region) \in \Rr_\THIN^a$, there exist $b \in \Int$ and 
  $c \in C$ such that $\nu {+} (b {-} \nu(c)) \in \region$,   
  it follows from Definition~\ref{brg-def} that RHS of (\ref{eqn1})
  equals: 
  \[ \begin{array}{c}
    \min_{\alpha \in \hA(s{,}[s])}
    \left\{ t((s{{,}} [s]){,} \alpha) {+} 
      \sum\limits_{(s'{,}R') \in \hS}
      \hp( (s'{,}R') | (s{,}[s]), \alpha) {\cdot}
      P(s'{,} R') \right\}
  \end{array} \]
  which by definition equals $\tP(s)$ as required. \qed
\end{proof}

\section{Expected Discounted-Time Games}
\label{sec:discounted-games}
Let $\Tt = (\pta, L_\mMIN, L_\mMAX)$ be a probabilistic timed game
arena and $\lambda \in [0, 1)$ be a discount factor.
In an expected discounted-time game starting from state $s$ and for a
strategy pair $\mu,\chi$, player Min loses the following amount to
player Max: 
\[
\EDISCTPRICE(s, \mu, \chi) \rmdef \eE^{\mu, \chi}_s
\left\{\mbox{$\sum_{i=1}^{\infty}$} 
  \lambda^i \cdot \pi(X_{i-1}, Y_{i}) \right\} \, .
\]
The concepts for the expected discounted-time game are defined in an
analogous manner to that of expected reachability-time games. 
A reduction from expected discounted-time game to expected
reachability-time game is standard~\cite{Shapley53}.
Therefore, using the techniques presented in this paper one can reduce
the problem of solving expected discounted-time games on $\Tt$ to
solving corresponding problem on $\hTt$. 
Similarly a (non-probabilistic) discounted-time game on timed
automata can be reduced to solving discounted-price games on
(non-probabilistic) boundary region graph.
\begin{proposition}[Discounted-Time Games]
  \label{proposition:discounted-results}
  \begin{enumerate}
  \item Expected discounted-time games on probabilistic timed
    automata can be reduced to expected discounted-price games on the
    corresponding boundary region graph.  
  \item Discounted-time games on timed automata
    can be reduced to discounted-price games on the corresponding
    (non-probabilistic) boundary region graph. 
  \end{enumerate}
\end{proposition}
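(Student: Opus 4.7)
The plan is to combine the classical Shapley reduction from discounted games to reachability games with the boundary region graph reduction established in Theorem~\ref{main-thm}. Given a probabilistic timed game arena $\Tt = (\pta, L_\mMIN, L_\mMAX)$ and a discount factor $\lambda \in [0,1)$, I will construct a new arena $\Tt^\lambda$ by adding a fresh absorbing final location $\ell_\bot$ to $\fin$ and replacing every transition distribution $\delta[\ell,a]$ with the mixture $\lambda \cdot \delta[\ell,a] + (1{-}\lambda) \cdot \delta_{(\emptyset,\ell_\bot)}$, leaving locations, invariants, and enabling conditions otherwise unchanged. A direct computation on the induced probability spaces then shows that, under any strategy pair $(\mu,\chi)$ and from any state $s$, the expected reachability-time of $\Tt^\lambda$ equals $\EDISCTPRICE(s,\mu,\chi)$ in $\Tt$, because the probability of surviving $i$ steps without being absorbed into $\ell_\bot$ is exactly $\lambda^i$ and hence the reward collected at step $i$ is weighted by $\lambda^i$ on both sides.

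Having reduced the expected discounted-time game on $\Tt$ to an expected reachability-time game on $\Tt^\lambda$, I will invoke Theorem~\ref{main-thm} to transfer the latter to an expected reachability-price game on the boundary region graph of $\Tt^\lambda$. To match the conclusion of the proposition, I will then argue that the boundary region graph construction of Definition~\ref{brg-def} commutes with the Shapley transformation: the fresh location $\ell_\bot$ contributes a single trivial final region, and the additional mass $1{-}\lambda$ attached to every outgoing distribution is independent of the boundary parameters $(b,c,a)$ and is therefore inherited verbatim by every action of the abstraction. Consequently the boundary region graph of $\Tt^\lambda$ coincides with $\hTt$ equipped with a $1{-}\lambda$ termination branch at every boundary action, which, reading Shapley in reverse, is precisely an expected discounted-price game on $\hTt$ with the same discount factor $\lambda$. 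The non-probabilistic case is handled identically: applying the same construction to a timed automaton yields a PTA whose only probabilistic branching is the Shapley branch, and the analogue of Theorem~\ref{main-thm} specialises to the non-probabilistic setting using the simple functions reviewed in Section~\ref{sec:region-graph-construction}.

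The main obstacle lies in the commutation step: one must verify carefully that inserting the $1{-}\lambda$ termination branch before abstracting to the boundary region graph produces the same MDP as first abstracting and then applying the Shapley discount. This relies on the fact that reaching $\ell_\bot$ requires no clock reset and $\ell_\bot$ occupies its own region, so the extra mass is a uniform additive term that does not interact with the choice of ``best'' boundary times singled out in the proof of Theorem~\ref{main-thm}; in particular the quasi-simplicity arguments of Lemma~\ref{lemma:prop-quasi-simple} and Lemma~\ref{lemma:nondecreasing-quasi-simple} go through unchanged because adding a constant term preserves both Lipschitz continuity and monotonicity with respect to $\unlhd$.
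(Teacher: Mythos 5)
Your route is the one the paper intends: the paper offers no detailed proof of this proposition, only the remark that the reduction from discounted objectives to reachability objectives is standard (Shapley) together with the boundary-region-graph machinery of Theorem~\ref{main-thm}, and your proposal fills in exactly that outline. The commutation step you single out is indeed the only substantive thing to check, and it does go through for the reason you give: the $(1{-}\lambda)$ branch resets no clocks and lands in a fresh final region, so it appears verbatim, with the same mass, on every boundary action of $\RegB{\pta^\lambda}$; moreover the two bookkeeping factors of $\lambda$ introduced by applying Shapley forwards on the PTA and backwards on the boundary region graph cancel, so the end-to-end reduction is sound.

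That said, two details of your construction are wrong as literally stated. First, the final-location set of $\Tt^\lambda$ must be $\{\ell_\bot\}$ alone rather than $\fin\cup\{\ell_\bot\}$: the objective $\EDISCTPRICE$ is an infinite sum that ignores the original final locations, whereas the reachability-time sum truncates at the first visit to a final state, so leaving the original $\fin$ final would cut the accumulation short on any play that reaches $\fin$ before absorption. Second, the claimed identity is off by one power of $\lambda$: the paper's reachability sum runs up to \emph{and including} the transition that enters the final set, so the $i$-th transition's reward is collected exactly when the play survives the first $i{-}1$ absorption trials, i.e.\ with probability $\lambda^{i-1}$; hence the expected reachability-time in $\Tt^\lambda$ equals $\lambda^{-1}\cdot\EDISCTPRICE(s,\mu,\chi)$, not $\EDISCTPRICE(s,\mu,\chi)$. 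Both slips are harmless for the reduction (demote the old final locations; rescale by $\lambda$, or observe the cancellation noted above), but they should be corrected before this is written up.
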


\section{Complexity}
\label{sec:complexity}

  

\begin{theorem}\label{reach-thm}
  The expected reachability-time games and the expected
  discounted-time games are EXPTIME-hard and they are in
  NEXPTIME $\cap$ co-NEXPTIME.
\end{theorem}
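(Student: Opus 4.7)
The proof splits into a lower bound and an upper bound. For EXPTIME-hardness, the plan is to appeal to the existing EXPTIME lower bound for reachability-time games on (non-probabilistic) timed automata with at least two clocks, established in \cite{BHPR07,JT07}. A non-probabilistic timed automaton is exactly a PTA in which every $\delta(\ell,a)$ is a point distribution, so the hardness transfers verbatim to expected reachability-time games on PTAs. The EXPTIME lower bound for discounted-time games then follows via the standard reduction of reachability to discounted objectives \cite{Shapley53}, which is precisely the reduction underpinning Proposition~\ref{proposition:discounted-results}.

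For the NEXPTIME $\cap$ co-NEXPTIME upper bound, the plan is to compose the boundary region graph reduction with a known NP $\cap$ co-NP algorithm for finite turn-based stochastic games. First, by Theorem~\ref{main-thm} (and Proposition~\ref{proposition:discounted-results}(1) for the discounted case), solving the game on $\Tt$ from a given initial configuration reduces to solving the corresponding expected reachability-price (respectively discounted-price) game on the reachable subgraph of the boundary region graph $\hTt$. By Lemma~\ref{lemma:reachable-subgraph-is-finite-MDP} this subgraph is finite, and a standard counting argument on regions --- at most $|L| \cdot (k{+}1)^{|\clocks|} \cdot |\clocks|!$ of them, each equipped with polynomially many boundary actions drawn from $\NATS{k} \times \clocks \times \act \times \Rr$ --- shows that its size is at most exponential in $|\Tt|$. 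Second, on this finite turn-based stochastic game the classical Condon-style argument applies: under Assumption~\ref{assum:proper-strategy} both players admit pure positional optimal strategies, a positional strategy is a polynomial-size witness, and verifying it reduces to computing the opponent's optimal counter-value on the resulting one-player MDP, which is polynomial-time solvable by linear programming. Composing the exponential-size graph produced in the first step with this NP $\cap$ co-NP procedure yields the required NEXPTIME $\cap$ co-NEXPTIME bound for both objectives.

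The main obstacle is justifying the NP $\cap$ co-NP claim for the finite stochastic game in the second step, since Condon's original result concerns qualitative simple stochastic games whereas here we face quantitative objectives (a sum of boundary-action costs until $\hF$ is hit, and a $\lambda$-discounted sum, respectively). Under Assumption~\ref{assum:proper-strategy} the expected reachability-price MDP is a well-posed optimal-stopping problem whose value is characterised by a polynomial-size linear program; the same holds for discounted MDPs independently of any stopping assumption, since $\lambda {<} 1$ makes $\ImproveB$ a contraction and the Bellman operator has a unique fixed point that is again computable by linear programming. Memoryless determinacy in both one-player settings then lifts to the two-player turn-based stochastic game, supplying the positional-strategy witnesses required for the NP and co-NP certificates and completing the argument.
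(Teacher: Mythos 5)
Your upper-bound argument is essentially the paper's: reduce to the boundary region graph via Theorem~\ref{main-thm} (resp.\ Proposition~\ref{proposition:discounted-results}), observe that the reachable subgraph guaranteed finite by Lemma~\ref{lemma:reachable-subgraph-is-finite-MDP} has size at most exponential in the size of $\Tt$, and then invoke the NP $\cap$ co-NP complexity of expected reachability-price and discounted-price games on finite turn-based stochastic arenas by guessing a pure positional strategy and verifying it with a linear program on the induced one-player MDP. The paper states exactly these two observations without elaboration, so your fleshing out of the Condon-style certificate is added detail rather than a departure. Your reachability lower bound also works, though by a different specialisation: you embed the two-player non-probabilistic reachability-time games of \cite{BHPR07,JT07} (every $\delta(\ell,a)$ a point distribution), whereas the paper embeds the one-player probabilistic games of \cite{JKNT09} (all locations assigned to one player). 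Both are legitimate restrictions of the general problem.

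The genuine gap is your EXPTIME-hardness claim for discounted-time games. The Shapley-style reduction underpinning Proposition~\ref{proposition:discounted-results} goes from \emph{discounted to reachability} (intuitively, add a sink reached with probability $1-\lambda$ at each step); a reduction $A \le B$ transfers hardness from $A$ to $B$, so this direction would only let you deduce hardness of reachability games from hardness of discounted games, the opposite of what you need. A reduction in the direction you invoke --- reachability to discounted --- would require choosing $\lambda$ close enough to $1$ that the discounted value pins down the reachability value; since the relevant finite game (the boundary region graph) is already exponential in the size of $\Tt$, the required $1-\lambda$ is exponentially small, i.e.\ $\lambda$ needs exponentially many bits, and the reduction is no longer polynomial. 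The paper sidesteps this entirely: \cite{JKNT09} establishes EXPTIME-completeness of the \emph{one-player} expected discounted-time problem on PTAs directly, and the two-player discounted game inherits hardness as a special case. You should either cite that result or supply a direct hardness construction for the discounted objective.
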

\begin{proof}
  The EXPTIME-hardness of expected reachability-time games and
  expected discounted-time games on probabilistic timed automata with
  two or more clocks follows from the fact that corresponding one-player games
  are EXPTIME-complete~\cite{JKNT09}. 
  The membership in NEXPTIME $\cap$ co-NEXPTIME follows from the
  reduction to the boundary region graph, and the observations that:
  size of the boundary region graph is exponential in the size of the
  PTA; and the complexity of solving expected reachability-price games
  and expected discounted-price games on finite MDP is in NP $\cap$
  co-NP. \qed 
\end{proof}

\section{Conclusion}
In this paper we have employed the boundary region graph to solve 
quantitative games over probabilistic timed automata.  
The approach is based on extending the class of simple functions
introduced in \cite{AM99} to quasi-simple functions. 
Our results demonstrate that the problem of solving games with either
expected reachability-time or expected discounted-time criteria on
PTA are in NEXPTIME $\cap$ co-NEXPTIME.
Future work includes finding practical symbolic zone-based algorithms
to solve quantitative games on timed automata and, perhaps more
ambitiously, games on PTA.  
Regarding the other quantitative games on PTA, we conjecture that it is
possible to reduce expected average-time games on PTA to mean payoff
games on boundary region graph. 
However, the techniques presented in this paper are insufficient to
demonstrate such a reduction.
\vskip8pt
\noindent {\bf Acknowledgements.}
The authors are supported in part by EPSRC grants EP/D076625 and EP/F001096.
In addition, the third author wishes to thank Vojt\v{e}ch Forejt for
some stimulating discussions on this class of games.

\bibliographystyle{plain} 
\bibliography{papers}

\begin{thebibliography}{10}

\bibitem{ABM04}
R.~Alur, M.~Bernadsky, and P.~Madhusudan.
\newblock Optimal reachability for weighted timed games.
\newblock In {\em Proc. ICALP'04}, volume 3142 of {\em LNCS}. Springer, 2004.

\bibitem{AD94}
R.~Alur and D.~Dill.
\newblock A theory of timed automata.
\newblock In {\em Theoretical Computer Science}, volume 126, 1994.

\bibitem{AM99}
E.~Asarin and O.~Maler.
\newblock As soon as possible: Time optimal control for timed automata.
\newblock In {\em Proc. HSCC'99}, volume 1569 of {\em LNCS}. Springer, 1999.

\bibitem{AMP95}
E.~Asarin, O.~Maler, and A.~Pnueli.
\newblock Symbolic controller synthesis for discrete and timed systems.
\newblock In {\em Hybrid Systems II}, volume 999 of {\em LNCS}. Springer, 1995.

\bibitem{Bea03}
D.~Beauquier.
\newblock Probabilistic timed automata.
\newblock {\em Theoretical Computer Science}, 292(1), 2003.

\bibitem{BC+07}
G.~Behrmann, A.~Cougnard, A.~David, E.~Fleury, K.~Larsen, and D.~Lime.
\newblock {UPPAAL-Tiga}: Time for playing games!
\newblock In {\em Proc. CAV'07}, volume 4590 of {\em LNCS}. Springer, 2007.

\bibitem{BBM06}
P.~Bouyer, T.~Brihaye, and N.~Markey.
\newblock Improved undecidability results on weighted timed automata.
\newblock In {\em Information Processing Letters}, volume~98. Elsevier, 2006.

\bibitem{BCFL04}
P.~Bouyer, F.~Cassez, E.~Fleury, and K.~G. Larsen.
\newblock Optimal strategies in priced timed game automata.
\newblock In {\em {FSTTCS}'04}, volume 3328 of {\em LNCS}. Springer, 2004.

\bibitem{BF09}
P.~Bouyer and V.~Forejt.
\newblock Reachability in stochastic timed games.
\newblock In {\em Proc. ICALP'09}, volume 5556 of {\em LNCS}. Springer, 2009.

\bibitem{BHPR07}
T.~Brihaye, T.~A. Henzinger, V.~S. Prabhu, and J.~Raskin.
\newblock Minimum-time reachability in timed games.
\newblock In {\em Proc. ICALP'07}, volume 4596 of {\em LNCS}. Springer, 2007.

\bibitem{BBR05}
Thomas Brihaye, V{\'e}ronique Bruy{\`e}re, and Jean-Fran\c{c}ois Raskin.
\newblock On optimal timed strategies.
\newblock In {\em Proc. FORMATS'05}, volume 3829 of {\em LNCS}. Springer, 2005.

\bibitem{dA99}
L.~de~Alfaro.
\newblock Computing minimum and maximum reachability times in probabilistic
  systems.
\newblock In {\em Proc. CONCUR'99}, volume 1664 of {\em LNCS}. Springer, 1999.

\bibitem{HMP92}
T.~A. Henzinger, Z.~Manna, and A.~Pnueli.
\newblock What good are digital clocks?
\newblock In {\em Proc. ICALP'92}, volume 623 of {\em LNCS}. Springer, 1992.

\bibitem{HW92}
G.~Hoffmann and H~Wong-Toi.
\newblock The input-output control of real-time discrete event systems.
\newblock In {\em In the Proceedings of 13th IEEE Real-Time Systems Symposium},
  1992.

\bibitem{Jen96}
H.~Jensen.
\newblock Model checking probabilistic real time systems.
\newblock In {\em Proc. 7th Nordic Workshop on Programming Theory}, Report
  86:247--261. Chalmers University of Technology, 1996.

\bibitem{JKNT09}
M.~Jurdzi{\'n}ski, M.~Kwiatkowska, G.~Norman, and A.~Trivedi.
\newblock Concavely-priced probabilistic timed automata.
\newblock In {\em Proc. CONCUR'09}, volume 5710 of {\em LNCS}, pages
  415–--430. Springer, 2009.

\bibitem{JLS08}
M.~Jurdzi{\'n}ski, J.~Sproston, and F.~Laroussinie.
\newblock Model checking probabilistic timed automata with one or two clocks.
\newblock {\em Logical Methods in Computer Science}, 4(3), 2008.

\bibitem{JT07}
M.~Jurdzi{\'n}ski and A.~Trivedi.
\newblock Reachability-time games on timed automata.
\newblock In {\em Proc. ICALP'07}, volume 4596 of {\em LNCS}. Springer, 2007.

\bibitem{JT08b}
M.~Jurdzi{\'n}ski and A.~Trivedi.
\newblock Average-time games.
\newblock In {\em Proc. FSTTCS'08}, volume~2 of {\em Leibniz International
  Proceedings in Informatics}. Schloss Dagstuhl, 2008.

\bibitem{JT08}
M.~Jurdzi{\'n}ski and A.~Trivedi.
\newblock Concavely-priced timed automata.
\newblock In {\em Proc. FORMATS'08}, volume 5215 of {\em LNCS}. Springer, 2008.

\bibitem{KNPS06}
M.~Kwiatkowska, G.~Norman, D.~Parker, and J.~Sproston.
\newblock Performance analysis of probabilistic timed automata using digital
  clocks.
\newblock {\em Formal Methods in System Design}, 29, 2006.

\bibitem{KNSS02}
M.~Kwiatkowska, G.~Norman, R.~Segala, and J.~Sproston.
\newblock Automatic verification of real-time systems with discrete probability
  distributions.
\newblock {\em Theoretical Computer Science}, 282, 2002.

\bibitem{NS04}
A.~Neyman and S.~Sorin, editors.
\newblock {\em Stochastic Games and Applications}, volume 570 of {\em NATO
  Science Series C}.
\newblock Kluwer Academic Publishers, 2004.

\bibitem{RW89}
P.~J. Ramadge and W.~M. Wonham.
\newblock The control of discrete event systems.
\newblock In {\em In Proceedings of the IEEE}, volume 77(1), 1989.

\bibitem{Shapley53}
L.~S. Shapley.
\newblock Stochastic games.
\newblock {\em Proc. Nat. Acad. Sci. U.S.A.}, 39, 1953.

\bibitem{Tri99}
S.~Tripakis.
\newblock Verifying progress in timed systems.
\newblock In {\em Proc. ARTS'99}, volume 1601 of {\em LNCS}. Springer, 1999.

\bibitem{ZP96}
U.~Zwick and M.~Paterson.
\newblock The complexity of mean payoff games on graphs.
\newblock {\em Theoretical Computer Science}, 158, 1996.

\end{thebibliography}

\newpage
\appendix
\section{Proof of Proposition~\ref{proposition:opt-strategies-from-opt-eqn}}
\begin{proof}[of Proposition~\ref{proposition:opt-strategies-from-opt-eqn}]
  We show that for every $\varepsilon {>} 0$, there exists a pure
  strategy $\mu_\varepsilon : \FRUNS \to A$ for player Min, such that
  for every strategy $\chi$ for player Max, we have 
  $\EREACHPRICE(s, \mu_\varepsilon, \chi) \leq P(s) {+} \varepsilon$. 
  The proof, that for every $\varepsilon {>} 0$, there exists a 
  pure strategy $\chi_\varepsilon : \FRUNS \to A$
  for player Max, such that for every strategy $\mu$ for player Min,
  we have $\EREACHPRICE(s, \mu, \chi_\varepsilon)) \geq P(s) -
  \varepsilon$, follows similarly. 
  Together, these facts imply that $P$ is equal to the value function
  of the expected reachability-time game, and the pure strategies
  $\mu_\varepsilon$ and $\chi_\varepsilon$, defined in the proof below
  for all $\varepsilon {>} 0$, are $\varepsilon$-optimal. 
  
  Let us fix $\varepsilon {>} 0$ and $\mu_\varepsilon$ be a pure strategy where
  for any $n \in \Nat$ and finite play $r \in \FRUNS$ of length $n$,
  $\mu_\varepsilon(r) = (t, a)$ is such that
  \[ \begin{array}{c}
  t +  \sum_{s' \in S} p(s'| \LAST(r), (t, a)) \cdot P(s')
  \leq P(\LAST(r)) {+} \frac{\varepsilon}{2^{n+1}} \, .
  \end{array} \] 
  Observe that for every state $s \in S_\mMIN$ and for every
  $\varepsilon' > 0$, there is a $\varepsilon'$-optimal timed action
  because $P \models \Opt(\Tt)$.  
  
  Again using the fact that $P \models \Opt(\Tt)$, it follows that,
  that for any $s \in S_\mMAX \setminus F$ and $(t, a) \in A$, we have 
  \begin{equation}\label{equation:max-T}
   \begin{array}{c}
 P(s)    \geq  t {+} \sum_{s' \in S} p(s'| s, a) \cdot P(s') \, .
\end{array}  \end{equation}
Now for an arbitrary strategy $\chi$ for player Max,
it follows by induction that for any $n \geq 1$:
\begin{equation}
    \label{e:reachprice} 
\begin{array}{rcl}
    P(s) & \geq &  \eE^{\mu_\varepsilon, \chi}_s  
    \left\{ \sum_{i=1}^{\min\set{i \, | \, X_i \in F}}
      \pi(X_{i-1}, Y_i) \right\} \\
    && + \sum_{s' \in S \setminus F} 
    \PROB^{\mu_\varepsilon,\chi}_s(X_n {=} s')\cdot P(s') 
    - (1 {-} \frac{1}{2^n}) {\cdot} \varepsilon \, .  
  \end{array}
\end{equation}
  Using Assumption~\ref{assum:proper-strategy}, we have
  $\lim_{n\to\infty} \mbox{$\sum_{s' \in S \setminus F}$}
  \PROB^{\mu_\varepsilon, \chi}_s(X_n {=} s') = 0$, and therefore taking the limit in (\ref{e:reachprice}) we get the inequality:
  \[
  P(s) \geq \eE^{\mu,\chi}_s\set{ 
    \mbox{$\sum_{i=1}^{\min\{i \, | \, X_i \in F \}}$} 
    \pi(X_{i-1}, Y_i)}  - \varepsilon 
  = \EREACHPRICE(s, \mu_\varepsilon, \chi)  - \varepsilon.
  \]
  which completes the proof.  \qed
\end{proof}

\section{Proof of Lemma~\ref{lemma:prop-quasi-simple}}
The proof of Lemma~\ref{lemma:prop-quasi-simple} follows from Propositions~\ref{p1}-\ref{p4} below.
Note that since every quasi-simple function $F : X \to \Real$ is Lipschitz
continuous, and hence Cauchy continuous, it can be uniquely extended
to closure of its domain $X$. 
The properties of quasi-simple function are trivially met by such
extensions. 

\begin{proposition}
  \label{p1}
  If $F: X \to \Real$ is quasi-simple, then $\CLOS{F} : \CLOS{X} \to
  \Real$ is quasi-simple.  
\end{proposition}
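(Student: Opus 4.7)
The plan is to show that the two defining properties of a quasi-simple function pass from $X$ to its closure $\CLOS{X}$. Since the paper's construction of $\CLOS{F}$ presumes $X$ convex and $F$ continuous, and since a quasi-simple $F$ is Lipschitz and hence uniformly continuous, the continuous extension $\CLOS{F}:\CLOS{X}\to\Real$ is well defined. I would then verify the Lipschitz bound and the $\unlhd$-properties separately for $\CLOS{F}$.

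The Lipschitz bound is routine: given $\nu,\nu' \in \CLOS{X}$, take sequences $\nu_n \to \nu$ and $\nu'_n \to \nu'$ in $X$, apply $|F(\nu_n) - F(\nu'_n)| \leq K \InfNorm{\nu_n - \nu'_n}$ for each $n$, and pass to the limit using continuity of $\CLOS{F}$ and of $\InfNorm{\cdot}$ to obtain $|\CLOS{F}(\nu) - \CLOS{F}(\nu')| \leq K\InfNorm{\nu - \nu'}$.

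The monotonicity and nonexpansiveness with respect to $\unlhd$ are more delicate. Given $\nu,\nu' \in \CLOS{X}$ with $\nu \unlhd \nu'$, write $\nu' - \nu = t \cdot e_C$ where $C \subseteq \clocks$ is non-empty and $e_C$ is the clock indicator vector. Arbitrary approximating pairs in $X$ need not satisfy $\nu_n \unlhd \nu'_n$, so the quasi-simple hypothesis on $F$ does not apply out of the box. The key trick is to exploit convexity: pick any point $\nu_0$ in the relative interior of $X$ (non-empty for a non-empty convex set in finite dimensions) and set
\[
\nu_n = (1 - 1/n)\,\nu + (1/n)\,\nu_0, \qquad \nu'_n = (1 - 1/n)\,\nu' + (1/n)\,\nu_0.
\]
By a standard result in convex analysis, convex combinations of a relative interior point with a point of $\CLOS{X}$ lie in $X$, so $\nu_n,\nu'_n \in X$. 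A direct calculation gives $\nu'_n - \nu_n = (1-1/n)(\nu' - \nu) = (1-1/n)\, t \cdot e_C$, so $\nu_n \unlhd \nu'_n$ with difference $(1-1/n)\,t$. Applying the quasi-simple inequalities for $F$ to these pairs and letting $n \to \infty$ (using continuity of $\CLOS{F}$) yields $\CLOS{F}(\nu) \geq \CLOS{F}(\nu')$ and $\CLOS{F}(\nu) - \CLOS{F}(\nu') \leq t$, as required.

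The main obstacle is precisely this construction of $\unlhd$-related approximating pairs inside $X$: naive approximation destroys the relation and blocks the limit argument. Shrinking both $\nu$ and $\nu'$ simultaneously toward a common relative interior point $\nu_0$ preserves the axis-aligned difference vector (rescaled only by $1{-}1/n$) while keeping the approximants in $X$ by convexity, which is exactly the ingredient needed to push the quasi-simple inequalities through to the closure.
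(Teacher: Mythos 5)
Your proof is correct and follows the same route as the paper, which simply observes that a quasi-simple $F$ is Lipschitz (hence uniformly continuous) and so extends uniquely to $\CLOS{X}$, and then asserts that the defining properties are ``trivially met'' by the extension. The one place where your write-up genuinely adds something is the construction of $\unlhd$-related approximating pairs via convex combinations with a relative-interior point: this is exactly the detail the paper's ``trivially'' hides (arbitrary approximating sequences need not preserve $\unlhd$), and your argument for it is sound, relying only on the convexity of $X$ that the paper's definition of $\CLOS{F}$ already presupposes.
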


\begin{proposition}
  \label{p2}
  If $F, F' : \hS \to \Real$ are regionally quasi-simple functions,
  then $\max(F, F')$  and $\min(F, F')$ are also regionally
  quasi-simple. 
\end{proposition}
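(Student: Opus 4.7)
The plan is to fix a region $(\ell,\region)\in\Rr$ and, writing $G(\nu) = F((\ell,\nu),(\ell,\region))$ and $G'(\nu) = F'((\ell,\nu),(\ell,\region))$, show that $\min(G,G')$ and $\max(G,G')$ are quasi-simple on $\region$. Since the two operations are completely symmetric (take $-F$ to swap them, or just duplicate the argument), I would present the $\min$ case in detail and state that $\max$ is analogous. So the reduction is to the following: given two quasi-simple functions $G,G' : X \to \Real$ on $X = \region$, verify the three conditions of quasi-simplicity for $H = \min(G,G')$.

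\emph{Lipschitz continuity.} Let $K, K'$ be Lipschitz constants for $G, G'$. Using the elementary inequality $|\min(a,b) - \min(c,d)| \leq \max(|a-c|,|b-d|)$, I obtain $|H(\nu) - H(\nu')| \leq \max(K,K') \cdot \InfNorm{\nu - \nu'}$, so $H$ is Lipschitz with constant $\max(K,K')$.

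\emph{Monotonicity with respect to $\unlhd$.} If $\nu \unlhd \nu'$, then by assumption $G(\nu) \geq G(\nu')$ and $G'(\nu) \geq G'(\nu')$, and $\min$ preserves these componentwise inequalities, giving $H(\nu) \geq H(\nu')$.

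\emph{Nonexpansiveness.} The slightly more delicate step is showing $H(\nu) - H(\nu') \leq \nu' - \nu = t$ whenever $\nu \unlhd \nu'$. I would argue by case analysis: assume WLOG that $H(\nu) = G(\nu)$ (the case $H(\nu) = G'(\nu)$ is symmetric). If $H(\nu') = G(\nu')$, the bound reduces to the nonexpansiveness of $G$. Otherwise $H(\nu') = G'(\nu')$, and I bound $G(\nu) \leq G'(\nu) \leq G'(\nu') + t$, where the first inequality is the assumption that $G(\nu)$ attains the min, and the second is the nonexpansiveness of $G'$. In every case, $H(\nu) - H(\nu') \leq t$. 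I do not anticipate a genuine obstacle here; the only place any care is needed is the last case split, to avoid accidentally using the wrong nonexpansiveness inequality, but the argument just sketched is direct. The argument for $\max(F,F')$ follows by the symmetric case analysis, using the identity $\max(a,b) = -\min(-a,-b)$ together with the fact that negation preserves Lipschitz constants (though note negation reverses the monotonicity and nonexpansiveness with respect to $\unlhd$, so I would instead repeat the analogous three-part verification directly for $\max$ to keep the direction of the inequalities correct).
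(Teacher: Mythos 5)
Your proposal is correct and follows essentially the same route as the paper's proof: reduce to pointwise $\min$/$\max$ of quasi-simple functions on a region, note Lipschitz continuity of the combination, observe that monotonicity w.r.t.\ $\unlhd$ is preserved componentwise, and establish nonexpansiveness by a case split on which function attains the extremum at each of the two valuations (the paper writes out four cases for $\max$; your two-case WLOG argument for $\min$ is the same analysis). Nothing is missing.
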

\begin{proof}
  To prove this proposition, it is sufficient to show that pointwise
  minimum and maximum of quasi-simple functions are quasi-simple. 
  Let $f, f': X \subseteq V \to \Real$ be quasi-simple. 
  We need to show that $\max(f, f')$ and $\min(f, f')$ are
  quasi-simple.
  
  Notice that $\max(f, f')$ and $\min(f, f')$ are Lipschitz
  continuous, as pointwise minimum and maximum of a finite set of
  Lipschitz continuous functions is Lipschitz continuous.
  
It therefore remains to show that $\max(f, f')$ and $\min(f, f')$ are monotonically
  decreasing and nonexpansive w.r.t $\unlhd$.
  Consider any $\nu,\nu' \in X$ such that $\nu_1 \unlhd \nu_2$.
  Since $f$ and $f'$ are quasi-simple, by definition $f$ and $f'$ are monotonically
  decreasing, and hence
  $f(\nu_1) \geq f(\nu_2)$ and $f'(\nu_1) \geq f'(\nu_2)$. 
Now since
  \[
  \max(f, f')(\nu_1) = \max \Set{f(\nu_1), f'(\nu_1)} \geq 
  \max \Set{f(\nu_2), f'(\nu_2)} = \max(f, f')(\nu_2),
  \]
  it follows that $\max(f, f')$ is monotonically decreasing w.r.t 
  $\unlhd$. 
  In an analogous manner we show that $\min(f, f')$ is monotonically
  decreasing w.r.t 
  $\unlhd$. 

  Again since $f$ and $f'$ are quasi-simple, we have that they are nonexpansive, i.e.,
  $f(\nu_1) {-} f(\nu_2) \leq \nu_2 {-} \nu_1$ and 
  $f'(\nu_1) {-} f'(\nu_2) \leq \nu_2 {-} \nu_1$.
  To show $\max(f, f')$ is nonexpansive, there are the following four cases to consider.
  \begin{enumerate}
  \item If $f(\nu_1) \geq f'(\nu_1)$ and $f(\nu_2) \geq f'(\nu_2)$, then
 $\max(f, f')(\nu_1) {-} \max(f,f')(\nu_2) = f(\nu_1) {-}
    f(\nu_2) \leq \nu_2 {-} \nu_1$.
  \item If $f'(\nu_1) \geq f(\nu_1)$ and $f'(\nu_2) \geq f(\nu_2)$, then
    $\max(f, f')(\nu_1) {-} \max(f,f')(\nu_2) = f'(\nu_1) {-}
    f'(\nu_2) \leq \nu_2 {-} \nu_1$.
  \item If $f(\nu_1) \geq f'(\nu_1)$ and $f'(\nu_2) \geq f(\nu_2)$, then
    $\max(f, f')(\nu_1) {-} \max(f,f')(\nu_2) = f(\nu_1) {-} f'(\nu_2) 
    \leq f(\nu_1) {-} f(\nu_2) \leq \nu_2 {-} \nu_1$.    
  \item If $f'(\nu_1) \geq f(\nu_1)$ and $f(\nu_2) \geq f'(\nu_2)$, then
    $\max(f, f')(\nu_1) {-} \max(f,f')(\nu_2) = f'(\nu_1) {-} f(\nu_2) 
    \leq f'(\nu_1) {-} f'(\nu_2) \leq \nu_2 {-} \nu_1$.    
  \end{enumerate}
  Since these are all the possible cases to consider, $\max(f, f')$ is nonexpansive w.r.t $\unlhd$. 
  Similarly show $\min(f, f')$ is nonexpansive completing the proof.
  \qed
\end{proof}

\begin{proposition}
  \label{p3}
  If $F$ is regionally quasi-simple, then for any $R = (\ell,
  \region)$ and $\alpha \in \hA(R)$ the function  
  $t(((\ell, \cdot), R), \alpha) + \sum_{s' \in S} p(s' | ((\ell,
  \cdot), R), \alpha) \cdot F(s')$ is quasi-simple.    
\end{proposition}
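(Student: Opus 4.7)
The plan is to unfold the definitions and verify the three defining conditions of quasi-simplicity—Lipschitz continuity, monotonicity under $\unlhd$, and nonexpansiveness under $\unlhd$—by direct computation. Fix a boundary action $\alpha = ((b, c, a), (\ell, \zeta_a)) \in \hA(R)$; for $\nu \in \CLOS{\region}$, set $\nu_a = \nu + (b - \nu(c))$ and $\nu_C = \nu_a[C{:=}0]$, so that the function of interest takes the form
\[
G(\nu) = (b - \nu(c)) + \sum_{(C, \ell')} \delta[\ell, a](C, \ell') \cdot F\bigl((\ell', \nu_C), (\ell', \zeta_a[C{:=}0])\bigr).
\]

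First I would handle Lipschitz continuity. The map $\nu \mapsto \nu_C$ is affine: components in $C$ are constantly $0$, and for $c' \notin C$ one has $\nu_C(c') - \nu'_C(c') = (\nu(c') - \nu'(c')) - (\nu(c) - \nu'(c))$, so $\InfNorm{\nu_C - \nu'_C} \leq 2 \InfNorm{\nu - \nu'}$. Combining this with the regional Lipschitz constant of $F$, and using that the $\delta$-weights sum to $1$, yields a Lipschitz bound for $G$.

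Next, to check the two $\unlhd$-properties, assume $\nu \unlhd \nu'$ with $\nu' - \nu = t$ and let $T = \{c' \in \clocks : \nu'(c') = \nu(c') + t\}$ be the time-elapsed subset. A short calculation gives $\nu'_a(c') - \nu_a(c') = (\nu'(c') - \nu(c')) - (\nu'(c) - \nu(c))$, which equals $-t$ exactly when $c \in T$ and $c' \notin T$, equals $+t$ exactly when $c \notin T$ and $c' \in T$, and is $0$ otherwise. After applying the reset $[C{:=}0]$, this shows that $\nu_C$ and $\nu'_C$ remain comparable under $\unlhd$: if $c \in T$ then $\nu'_C \unlhd \nu_C$ (or they are equal), and if $c \notin T$ then $\nu_C \unlhd \nu'_C$ (or equal), with the $\unlhd$-difference bounded by $t$ in both cases. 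Expanding
\[
G(\nu) - G(\nu') = (\nu'(c) - \nu(c)) + \sum_{(C, \ell')} \delta[\ell, a](C, \ell') \cdot \bigl(F(\nu_C) - F(\nu'_C)\bigr),
\]
and splitting on membership of $c$ in $T$: if $c \in T$ the first term equals $t$ and each $F$-summand lies in $[-t, 0]$ by the monotonicity and nonexpansiveness of $F$; if $c \notin T$ the first term vanishes and each $F$-summand lies in $[0, t]$. In both cases $G(\nu) - G(\nu') \in [0, t]$, which is precisely the combination of monotonicity and nonexpansiveness required.

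The main obstacle is the sign reversal in the previous step: applying the reset $[C{:=}0]$ can flip the direction of the $\unlhd$ relation between the successor valuations, and whether the flip occurs is governed by the membership $c \in T$. The key observation that saves the argument is that the contribution of the time term $(b - \nu(c))$ equals $t$ exactly in the case where the $F$-summands are potentially non-positive (namely $c \in T$), and equals $0$ in the case where they are non-negative (namely $c \notin T$); the two effects are precisely calibrated so that the total lies in $[0, t]$ uniformly. Everything else is routine bookkeeping.
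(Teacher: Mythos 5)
Your proof is correct and follows essentially the same route as the paper's: the same expansion of the function into the time term $(b-\nu(c))$ plus the $\delta$-weighted sum, a Lipschitz bound inherited from that of $F$, and the same case split on whether the boundary clock $c$ is among the time-elapsed clocks, using monotonicity and nonexpansiveness of $F$ to place each summand in $[-t,0]$ or $[0,t]$ so that the time term exactly compensates. Your Lipschitz step is in fact slightly more careful than the paper's (you account for the factor $2$ in $\InfNorm{\nu_C - \nu'_C}$), but the argument is otherwise identical.
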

\begin{proof}
  Let $F$ be regionally quasi-simple and fix a region $R = (\ell,\region)$ and a boundary action 
  $\alpha =  ((a, b, c), (\ell, \region')) \in \hA(R)$ .
  We need to show that the function 
\[ \begin{array}{c}
F^\oplus_{\alpha, R}(\cdot) = 
  t (((\ell, \cdot), R), \alpha) + \sum_{s' \in S} p(s' | ((\ell,
  \cdot), R), \alpha) \cdot F(s')
\end{array} \] on the domain 
  $D = \{\nu \in V \, | \,  ((\ell, \nu), R) \in \hS \}$ is
  quasi-simple. 
  Let us first simplify the function $F^\oplus_{\alpha,R}$. For any $\nu \in D$ we have:
  \begin{eqnarray*}
    F^\oplus_{\alpha, R}(\nu) &=& t (((\ell, \nu), R), \alpha) 
    + \sum_{s' \in S} p(s' | ((\ell, \nu), R), \alpha) \cdot F(s')\\
    &=& (b {-} \nu(c)) 
    + \sum_{s' \in S} p(s' | ((\ell, \nu), R), \alpha) \cdot F(s')\\ 
    &=& (b {-} \nu(c) ) +
    \mbox{$\sum_{(C \!{,}\ell') \in \powC \times L}$}
    \delta[\ell{,}a](C{,} \ell') {\cdot} 
    F((\ell'{,}\nu_{\alpha,C}), (\ell', \region'[C{:=}0])) \\
    &=& (b {-} \nu(c) ) +
    \mbox{$\sum_{(C \!{,}\ell') \in \powC \times L}$}
    \delta[\ell{,}a](C{,} \ell') {\cdot} 
    F(s_{\ell',\nu,\alpha,C} )
  \end{eqnarray*}
  where $\nu_{\alpha,C} = (\nu{+} (b {-} \nu(c)))[C{:=}0])$ and $s_{\ell',\nu,\alpha,C}  = ((\ell'{,}\nu_{\alpha,C}), (\ell', \region'[C{:=}0]))$.
  
  Next using this simplified version we demonstrate that $F^\oplus_{\alpha, R}$ is Lipschitz
  continuous.  If $F$ is Lipschitz continuous with constant $K$, then $|F^\oplus_{\alpha, R}(\nu) {-} F^\oplus_{\alpha, R}(\nu')|$
equals
  \begin{eqnarray*}
    \lefteqn{\hspace*{-2cm} |\nu'(c) {-} \nu(c)| +  \sum_{(C \!{,}\ell') \in \powC \times L}
    \delta[\ell{,}a](C{,} \ell') {\cdot} 
    \left| 
      F(s_{\ell',\nu',\alpha,C}) 
      {-}    F(s_{\ell',\nu,\alpha,C})
    \right|} \\ 
    &\leq &  |\nu'(c) {-} \nu(c)| + 
    \sum_{(C \!{,}\ell') \in \powC \times L}
    \delta[\ell{,}a](C{,} \ell') {\cdot} K \cdot \InfNorm{\nu{-}\nu'}\\
    & = & |\nu'(c) {-} \nu(c)| +  K \cdot \InfNorm{\nu{-}\nu'}
    \leq (1 + K) \cdot \InfNorm{\nu{-}\nu'}.
  \end{eqnarray*}
  The first inequality follows from the fact that $F$ is Lipschitz
  constant with constant $K$.
  Hence it follows that $F^\oplus_{\alpha, R}$ is Lipschitz constant
  with constant $(1+K)$.
  
  It therefore remains to show that $F^\oplus_{\alpha, R}$ is
  monotonically decreasing and nonexpansive w.r.t $\unlhd$.
  Consider any $\nu, \nu' \in V$ such that 
  $\nu \unlhd \nu'$ and $\nu' {-} \nu = d$.
  We have the following  two cases to consider.
\begin{itemize}
  \item
    If $\nu(c) = \nu'(c)$, then for any set 
    $(C, \ell') \in \powC \times L$ 
    we have that $(\nu{+} b {-} \nu(c))[C{:=}0] \unlhd (\nu{+} b {-}
    \nu'(c))[C{:=0}]$, and hence  $F(s_{\ell',\nu,\alpha,C}) {-}  F(s_{\ell',\nu',\alpha,C})$
is nonnegative for all $(C, \ell') \in \powC \times L$.
    Moreover, since $F$ is nonexpansive, we have that $F(s_{\ell',\nu,\alpha,C}) -  F(s_{\ell',\nu',\alpha,C}) \leq d$.
    It follows that $F^\oplus_{\alpha, R}$ is monotonically
    decreasing and non-expansive as
    \[
    F^\oplus_{\alpha,R}(\nu) {-} F^\oplus_{\alpha,R}(\nu') 
    = \mbox{$\sum_{(C \!{,}\ell') \in \powC \times L}$}
    (F(s_{\ell',\nu,\alpha,C}) -  F(s_{\ell',\nu',\alpha,C})) 
    \]  
   and  $\nu' {-} \nu = d$.
  \item
    If $\nu'(c) {-} \nu(c) = d$, then for any $(C, \ell') \subseteq \powC \times L$ we have that 
    \[
(\nu'{+} b {-} \nu'(c))[C{:=0}] \unlhd (\nu{+} b {-} \nu(c))[C{:=}0]
\]
which implies that
$F(s_{\ell',\nu,\alpha,C}) {-}  F(s_{\ell',\nu',\alpha,C})$
    is nonpositive for all $(C, \ell') \in \powC \times L$.
    Moreover since $F$ is nonexpansive, we have that $F(s_{\ell',\nu,\alpha,C}) {-}  F(s_{\ell',\nu',\alpha,C})
    \leq d$.
    Similarly to the case above we have that $F^\oplus_{\alpha,R}$ is monotonically decreasing  and nonexpansive.
  \end{itemize}
  The proof is now complete.
  \qed
\end{proof}
The following proposition is immediate as the limit of Lipschitz
continuous functions is Lipschitz continuous, and
the limit of monotonically decreasing and nonexpansive functions is
monotonically decreasing and nonexpansive.
\begin{proposition}
  \label{p4}
  The limit of a sequence of quasi-simple functions is quasi-simple.
\end{proposition}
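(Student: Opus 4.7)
The plan is to verify that each of the two defining conditions of quasi-simplicity is preserved under pointwise limits. Let $\{F_n : X \to \Real\}_{n \geq 0}$ be a sequence of quasi-simple functions converging pointwise to some $F : X \to \Real$, and suppose the Lipschitz constants $K_n$ of the $F_n$ are uniformly bounded by some $K$ (I return to this hypothesis at the end).

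First I would handle the monotonicity and nonexpansive property with respect to $\unlhd$, which is the easier clause. Fix $\nu, \nu' \in X$ with $\nu \unlhd \nu'$ and $\nu' - \nu = t$. Quasi-simplicity of each $F_n$ supplies the two weak inequalities $F_n(\nu) \geq F_n(\nu')$ and $F_n(\nu) - F_n(\nu') \leq t$. Since weak inequalities between real numbers are preserved under pointwise limits, taking $n \to \infty$ yields $F(\nu) \geq F(\nu')$ and $F(\nu) - F(\nu') \leq t$, so $F$ satisfies the second clause of the definition.

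Next I would address Lipschitz continuity. For any $\nu, \nu' \in X$ the uniform bound $|F_n(\nu) - F_n(\nu')| \leq K \cdot \InfNorm{\nu - \nu'}$ holds for every $n$; continuity of the absolute value together with pointwise convergence then gives
\[
|F(\nu) - F(\nu')| \;=\; \lim_{n \to \infty} |F_n(\nu) - F_n(\nu')| \;\leq\; K \cdot \InfNorm{\nu - \nu'},
\]
so $F$ is Lipschitz with constant $K$.

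The main obstacle is the uniform Lipschitz bound on the $F_n$. In the intended application to the value iteration sequence $F_{n+1} = \ImproveB(F_n)$, such a bound is obtained by inductively tracking Lipschitz constants using Proposition~\ref{p3} (whose proof shows that one application of the update turns a Lipschitz constant $K$ into at most $1{+}K$) and Proposition~\ref{p2} (which shows that the $\min$ and $\max$ over boundary actions preserve the Lipschitz constant), combined with the contraction of $\ImproveB$ guaranteed by Assumption~\ref{assum:proper-strategy} and Lemma~\ref{lemma:reachable-subgraph-is-finite-MDP}: the geometric contraction causes the incremental Lipschitz growth to be absorbed into a convergent geometric series, yielding a finite bound uniform in $n$. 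Under this assumption the three steps above complete the proof.
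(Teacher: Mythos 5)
Your treatment of the second clause is correct and is exactly what the paper does: monotonicity and nonexpansiveness with respect to $\unlhd$ are weak inequalities between real numbers, so they pass to pointwise limits. You have also put your finger on the one genuinely non-trivial point, which the paper's own one-sentence proof (``the limit of Lipschitz continuous functions is Lipschitz continuous'') silently elides: a pointwise limit of Lipschitz functions is Lipschitz only when the Lipschitz constants are uniformly bounded, and the proposition as literally stated is false without that hypothesis. Making this explicit is an improvement over the paper.

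However, your proposed justification of the uniform bound does not work as stated. The contraction of $\ImproveB$ supplied by Assumption~\ref{assum:proper-strategy} and Lemma~\ref{lemma:reachable-subgraph-is-finite-MDP} is a contraction in the \emph{supremum} norm: it makes $\sup_s |F_{n+1}(s)-F_n(s)|$ decay geometrically, but it gives no control on the Lipschitz seminorm of the increments $F_{n+1}-F_n$ --- a function of arbitrarily small sup norm can have an arbitrarily large Lipschitz constant, so there is no ``geometric series'' of Lipschitz constants to sum. By your own accounting (Proposition~\ref{p3} turns a constant $K$ into $1+K$ and Proposition~\ref{p2} preserves constants), the only inductive bound available is $K_n \leq K_0 + n$, which is unbounded; nothing in the sup-norm contraction absorbs this growth. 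Note the contrast with the $\unlhd$-clause, which survives precisely because it is a \emph{fixed} constant-$1$ bound in the time-flow direction rather than an iteration-dependent one. To close the gap one needs a separate argument that the limit itself is Lipschitz --- for instance, a direct estimate on the fixed point using the fact that, by Assumption~\ref{assum:proper-strategy} and the finiteness of the reachable subgraph, the expected number of steps to reach $\hF$ is uniformly bounded, and each step contributes a reward $b - \nu(c)$ that is $1$-Lipschitz in the valuation --- or else the proposition should be restated for uniformly Lipschitz sequences and that hypothesis verified where it is applied. This defect is inherited from (indeed, present in) the paper's own proof; you have made it visible but not repaired it.
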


\section{Proof of Lemma~\ref{lemma:nondecreasing-quasi-simple}}
\begin{proof}[of Lemma~\ref{lemma:nondecreasing-quasi-simple}]
  Let $s = (\ell, \nu) \in S$ and $(\ell, \region) \in \Rr$ be such that
  $(\ell,[\nu]) \rightarrow_* (\ell, \region)$ and
  let $F : \hS \to \Real$ be regionally quasi-simple.
  We wish to show that the function 
  $F^\oplus_{s, \region, a}: I \to \Real$ defined as 
  \[
  F^\oplus_{s, \region, a}(t)
  \rmdef t  + \mbox{$\sum_{(C \!,\ell') \in \powC \times L}$}
  \delta[\ell,a](C,\ell') {\cdot} F((\ell',\nu^t_C), (\ell', \region^C))
  \] 
  is continuous and nondecreasing, where $I = \{t \in \Rplus \, | \, \nu {+} t \in \region \}$,
  $\nu^t_C = \nu+t[C{:=}0]$ and $\region^C = \region[C{:=}0]$.
  .
  
  Let $t_1, t_2 \in I$ are such that $t_1 \leq t_2$.
  To prove this proposition we need to show that 
  $F^\oplus_{s, \region, a}(t_2) {-} F^\oplus_{s, \region, a}(t_1)$ is
  nonnegative.  Now by definition we have $F^\oplus_{s, \region, a}(t_2) {-} F^\oplus_{s, \region, a} (t_1) $ equals:
  \begin{eqnarray*}
    \lefteqn{t_2 {-} t_1
    + \mbox{$\sum\limits_{(C \!,\ell') \in \powC \times L}$}
    \delta[\ell,a](C,\ell') {\cdot} \big(
    F((\ell',\nu^{t_2}_C), (\ell', \region^C))
    {-} F((\ell',\nu^{t_1}_C), (\ell', \region^C))\big)} \\
    &=& t_2 {-} t_1 {-} \mbox{$\sum\limits_{(C \!,\ell') \in \powC \times L}$}
    \delta[\ell,a](C,\ell') {\cdot} \big(
    F((\ell',\nu^{t_1}_C), (\ell', \region^C))
    {-} F((\ell',\nu^{t_2}_C), (\ell', \region^C))\big)\\
    & \geq  & t_2 {-} t_1 {-}
    \mbox{$\sum\limits_{(C \!{,}\ell') \in \powC \times L}$}
    \delta[\ell{,}a](C{,} \ell') {\cdot} (t_2 {-} t_1) \\ & \geq &  0
  \end{eqnarray*}
  where the inequality is due to the fact the $F$ is monotonically
  decreasing and nonexpansive.  
  \qed
\end{proof}

\end{document}